\newfont{\handw}{cmmi10 scaled 1200}
\newtheorem{Prop}{Proposition}[section]
\newtheorem{Th}[Prop]{Theorem}
\newtheorem{Rm}[Prop]{Remark}
\newtheorem{Def}[Prop]{Definition}
\newfont{\smcal}{cmu10 scaled 1200}
\newcommand{\cov}{\operatorname {COV}}
\newcommand{\E}{\operatorname {\mathbb E}}
\newcommand{\tr}{\operatorname {tr}}
\newcommand{\lw}{\mbox{\handw \symbol{96}}}
\newcommand{\argmin}{\operatorname {argmin}}
\title{ Intrinsic Inference on the Mean Geodesic of Planar Shapes and Tree Discrimination by Leaf Growth}
\author{Stephan Huckemann}
\date{}
\begin{document}
\maketitle{}

\begin{abstract} 
For planar landmark based shapes, taking into account the non-Euclidean geometry of the shape space, a statistical test for a common mean first geodesic principal component (GPC) is devised. It rests on one of two asymptotic scenarios, both of which are identical in a Euclidean geometry. 
For both scenarios, 
strong consistency and central limit theorems are established, along with an algorithm for the computation of a Ziezold mean geodesic. In application, this allows to verify the geodesic hypothesis for leaf growth of Canadian black poplars and to discriminate genetically different trees by observations of leaf shape growth over brief time intervals. 
With a test based on Procrustes tangent space coordinates, not involving the shape space's curvature, neither can  be achieved.
\end{abstract}
\par
\vspace{9pt}
\noindent {\it Key words and phrases:}
geodesic principal components, Ziezold mean, asymptotic inference, strong consistency, central limit theorem, parallel--transport, 
shape analysis, forest biometry, geodesic and parallel hypothesis
\par
\vspace{9pt}
\noindent {\it AMS 2000 Subject Classification:} \begin{minipage}[t]{6cm}
Primary 62H30\\ Secondary 62H35, 53C22
 \end{minipage}

\section{Introduction}\label{intro-scn}

	In this paper the novel statistical problem of developing asymptotics for the estimation of the mean geodesic on a shape space is considered. It is the generalization to a non-Euclidean geometry of the asymptotics for the estimation of a straight  first principal component line from multivariate data in the Euclidean geometry. Due to curvature involved, however, methods from linear algebra as employed in the Euclidean geometry cannot be used, and a new approach has to be developed. The task at hand is more involved, yet somehow comparable to the situation of generalizing the concept of the mean for multivariate data to a mean for manifold valued data. For such manifold valued means pioneering work for definitions, existence, uniqueness, algorithms and asymptotics has been done by \cite{Gow, Z77,KWS90, G91,HL96,HL98,L01,BP03,BP05} and many others. In this work definitions for a mean geodesic, an algorithm and asymptotics are proposed and developed for data on Kendall's space of planar shapes. In particular, the following two different statistical scenarios are considered:  asymptotics with respect to underlying shapes -- the \emph{mean geodesic of shapes} -- and asymptotics with respect to underlying sampled geodesics -- the \emph{mean geodesic of geodesics}.

	The study of geodesics on shape spaces as the simplest model for a path of temporal evolution of shape is of high interest in shape analysis, in particular, in biological studies comparing growth patterns.

	Unlike previous attempts in the literature (e.g. \cite{JK87,KMMA01,KDL07} ) building on a Euclidean tangent space linearization of the shape space, the mean geodesic of geodesics defined here builds on a Euclidean tangent space linearization of the \emph{space of geodesics} of the shape space which has been introduced in \cite{HT06}. Hence as a new and quite abstract concept, we treat here geodesics as data points.

	In application, in a joint research study on leaf growth with the Institute for Forest Biometry and Informatics at the University of G\"ottingen, it turns out that it is precisely this subtle difference of linearizing the space of geodesics and not the shape space that successfully allows to discriminate genetically different Canadian black poplars by observation of leaf shape growth during a short time interval of the growing period. The research study presented here is 
	fundamental for model building of leaf shape growth as well as for designing effective subsequent studies to investigate multiple endogenous and exogenous factors in leaf shape growth: E.g. since the beginning of the last century it has been well known that the leaf shape of (genetically) identical trees varies along a climate gradient (e.g. \cite{Brenner1902,BS15,RMRA08}). 
	Since \cite{Wolfe78} this  relationship has been successfully exploited for paleoclimate reconstruction resulting in the ``Climate Leaf Analysis Multivariate Program'' (CLAMP, \cite{Wolfe93}). Naturally, the underlying studies have been based on the shape of mature leaves; little is known about the temporal evolution of shape along a climate gradient. The research presented here indicates that a study involving only very few measurements of growing leaves 
	may allow for a fairly good reconstruction and analysis of growth patterns, further elucidating the relationship of climate and leaf shape.

	This paper is organized in a theoretical and an applied part. 

	The theoretical first part consisting of the following two sections establishes the  statistical theory for the two types of means. In Section \ref{SS:scn}, after a brief review of Kendall's space of planar shapes, the concept of a Fr\'echet mean is extended to the space of geodesics while the underlying random deviates assume values in the shape space. Strong consistency in the sense of \citet{Z77} a well as in the sense of \citet{BP03} are established. In the appendix it shown that the original arguments can be extended nearly one-to-one to the general case considered here. In order to apply the central limit theorem (CLT) of 
	\cite{H_Procrustes_10}, smoothness in geodesics of the square of the canonical distance between shapes and geodesics for geodesics close to the data is established. Then in Section \ref{Asymp:scn}, smoothness is shown for the square of a metric of Ziezold type (cf. \cite{H_Procrustes_10}) for the space of geodesics leading to the other CLT. Finally, after establishing an explicit method for optimal positioning a fast algorithm for the computation of a mean geodesic of geodesics is derived. An algorithm for the mean geodesic of shapes has been derived earlier (\cite{HT06}). 

	The applied second part 
	introduces the leaf shape data considered, the driving questions from forest biometry, statistical tests and some answers through the data analysis. In Section \ref{Forest:scn} the problem of discrimination by short growth observations is discussed. In particular, the relevance of the geodesic hypothesis from \cite{LK00} is noted for the devising of statistical tests in Section \ref{Tests:scn}. These are evaluated in Section \ref{App:scn} showing that only the \emph{test for common geodesics} can establish the validity of the geodesic hypothesis and the discrimination of genetically different trees on the basis of observations of brief leaf shape growth. Section \ref{Disc:scn} concludes with a discussion and gives an outlook.
 
\section{The  First Geodesic Principal Component for Planar Shape Spaces}\label{SS:scn}

	Throughout this work $\mathbb E(Y)$ denotes the classical expectation of a random variable $Y$ in a Euclidean space $\mathbb R^D$, $D\in \mathbb N$. A \emph{distance} $\delta$ on a topological space $\Gamma$ is a continuous mapping $\delta:\Gamma\times \Gamma\to[0,\infty)$ that vanishes on the diagonal $\{(\gamma,\gamma): \gamma\in \Gamma\}$; in contrast to a metric, $\delta$ is neither required to be non-zero off the diagonal, to be symmetric nor to satisfy the triangle inequality. 

	\paragraph{Kendall's planar shape spaces}
	In the statistical analysis of similarity shapes based on landmark configurations, geometrical $m$-dimensional objects (usually $m=2,3$) are studied by placing $k>m$ \emph{landmarks} at specific locations of each object, cf. Figure \ref{quadrangular-shape:fig} on page \pageref{quadrangular-shape:fig}. Each object is then described by a matrix in the space $M(m,k)$ of $m\times k$ matrices, each of the $k$ columns denoting an $m$-dimensional landmark vector. The usual inner product is denoted by $\langle x,y\rangle := \tr(xy^T)$ giving the norm $\|x\| = \sqrt{\langle x,x\rangle}$. For convenience and without loss of generality for the considerations below, only \emph{centered} configurations are considered. Centering can be achieved by multiplying with a sub-Helmert matrix 
	from the right, yielding a configuration 
	in $M(m,k-1)$. For this and other centering methods cf. \citet[Chapter 2]{DM98}. 
	Excluding also all configurations with all landmarks coinciding 
	gives the space of \emph{configurations} 
	\begin{eqnarray*}
	F_m^k&:=& M(m,k-1) \setminus \{0\} \,.
	\end{eqnarray*}
	Since 
	only the similarity shape is of concern, in particular we are not interested in size, we may assume that all configurations are contained in the \emph{pre-shape sphere}
	$ S_m^k :=\{x\in M(m,k-1): \|x\|=1\}$. Then, all normalized configurations that are related by a rotation from the special orthogonal group $SO(m)$ form the equivalence class of a \emph{shape}
	$$[x] = \{gx:g\in SO(m)\}$$
	and the canonical quotient is \emph{Kendall's shape space} 
	$$\Sigma_m^k := S_m^k/SO(m) = \{[x]:x\in S_m^k\},~~\mbox{with canonical projection }\frak{p}: S_m^k \to \Sigma_m^k\,.$$


	In this paper we restrict ourselves to planar configurations, i.e. to the case of $m=2$. Then, complex notation comes in handy. For a detailed discussion of the following setup, cf. \cite{K84,K89} as well as \cite{KBCL99}. We take the notation from \cite{HT06}. 
	Identify $F_2^k$ with $\mathbb C^{k-1}\setminus \{0\}$ such that every landmark column corresponds to a complex number. This means in particular that $z\in \mathbb C^{k-1}$ is a complex row(!)-vector. With the Hermitian conjugate $a^* = (\overline{a_{kj}})$ of a complex matrix $a=(a_{jk})$ the pre-shape sphere $S_2^k$ is identified with $\{z\in \mathbb C^{k-1}: zz^*=1\}$ on which $SO(2)$ identified with $S^1=\{\lambda \in\mathbb C: |\lambda|=1\}$ acts by complex scalar multiplication. Then the well known Hopf-Fibration mapping to complex projective space gives $\Sigma_2^k=S_2^k/S^1=\mathbb CP^{k-2}$. 

	\paragraph{The spaces of geodesics}  Note that every geodesic can be parametrized by unit speed, which we assume in the following. 
	Every great circle $\gamma(t) = x\cos t+ v\sin t$, $x,v\in S_2^k, \langle x,v\rangle=0$ is a geodesic on $S_2^k$, the space of geodesics is denoted by $\Gamma(S_2^k)$. A great circle is called a  \emph{horizontal great circle} if additionally $\langle ix,v\rangle =0$, the space of \emph{horizontal great circles} is denoted by $\Gamma^H(S_2^k)$. It is well known (e.g. \cite{KBCL99,HT06}) that this space projects to the space $\Gamma(\Sigma_2^k)$ of  geodesics of the shape space via
	$$ \Gamma(\Sigma_2^k) = \{\frak{p} \circ \gamma : \gamma \in \Gamma^H(S_2^k)\}\,.$$
	Then, with the \emph{Stiefel manifold} (giving all great circles)
	$$O_2(2,k-1)=\{(x,v)\in F_2^k\times F_2^k: \langle x,x\rangle = 1 = \langle v,v\rangle, \langle x,v\rangle = 0\}$$
	every tuple in the implicitly defined submanifold (additionally requiring horizontality)
	$$O^H_2(2,k-1) := \{(x,v)\in O_2(2,k-1): \langle x,iv\rangle 
	=0\}$$
	corresponds to an element in $\Gamma^H(S_2^k)$.  Several tuples, however, may determine the same geodesic. To this end consider the action of the orthogonal group $O(2)$ and $S^1$ from right and left, respectively, by $(x,v) \mapsto  h_t\,(x,v) \,g_{\phi,\epsilon}$ with 
	\begin{eqnarray*}
	g_{\phi,\epsilon} = \left(\begin{array}{cc}\cos \phi&-\epsilon\sin\phi\\ \sin\phi&\epsilon\cos\phi\end{array}\right)\in O(2),&&h_t = e^{it}\in S^1,\end{eqnarray*}
	for $\phi,t \in [0,2\pi)$ and $\epsilon = \pm 1$ defined by 
	\begin{eqnarray*}
	(x,v)g_{\phi,\epsilon} &=& (x\cos\phi+v\sin\phi, v\epsilon\cos\phi-x\epsilon\sin\phi),\\
	h_t(x,v) &=&  (e^{it}x,e^{it}v)\,.\end{eqnarray*}
	For a manifold $M$ with a group $K$ acting from the right and a group $G$ acting from the left, denote by
	$$ K\backslash M/G = \{[P],P\in M\},\mbox{~~where~}[P] = \{gPk: g\in G, k\in K\}$$	
	the canonical double quotient, (e.g. \cite{T88}) 
	With this notation we take the following from \cite{HT06}.

		\begin{Th}\label{sp_of_geod_thm} The space of point sets of all geodesics on planar shape space
		can be given
		the canonical structure
		$$\Gamma(\Sigma^{k}_2) ~\cong~ O(2)\backslash O^H_2(2,k-1)/S^1$$
		of a compact manifold of dimension $4k-10.$
		\end{Th}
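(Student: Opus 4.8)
The plan is to identify $\Gamma(\Sigma_2^k)$ with the double quotient $O(2)\backslash O^H_2(2,k-1)/S^1$ and then show that the latter is a compact manifold of the stated dimension. First I would verify the set-theoretic bijection: every horizontal great circle on $S_2^k$ projects under $\frak{p}$ to a geodesic of $\Sigma_2^k$, and conversely every geodesic of $\Sigma_2^k$ lifts horizontally; two tuples $(x,v),(x',v')\in O^H_2(2,k-1)$ determine the same point set $\{\frak p(\gamma(t)):t\}$ precisely when they differ by a reparametrization/orientation change (the right $O(2)$-action via $g_{\phi,\epsilon}$, where $\phi$ shifts the parameter and $\epsilon$ reverses orientation) composed with a phase change (the left $S^1$-action $h_t$, which replaces a horizontal lift by another one over the same shape geodesic). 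Checking that these two actions together generate exactly the fibres of the map $O^H_2(2,k-1)\to\Gamma(\Sigma_2^k)$ is the conceptual heart; I would lean on the description of horizontal great circles in \cite{KBCL99,HT06} and on the fact that the Hopf fibration $S_2^k\to\Sigma_2^k$ has horizontal lifts unique up to the $S^1$-phase.

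Next I would establish the manifold structure. The Stiefel manifold $O_2(2,k-1)$ of orthonormal $2$-frames in $\mathbb C^{k-1}\cong\mathbb R^{2(k-1)}$ is a classical compact manifold of dimension $2\cdot 2(k-1)-\binom{2+1}{2}=4k-7$; I would then check that the single additional smooth equation $\langle x,iv\rangle=0$ cuts out $O^H_2(2,k-1)$ as a codimension-one submanifold, by verifying that $0$ is a regular value of the function $(x,v)\mapsto\langle x,iv\rangle$ on $O_2(2,k-1)$ — a short computation with the tangent space of the Stiefel manifold. This gives $\dim O^H_2(2,k-1)=4k-8$. The groups $O(2)$ and $S^1$ are compact; one must check that the combined left-$S^1$/right-$O(2)$ action on $O^H_2(2,k-1)$ is smooth, proper (automatic by compactness of the groups) and free, so that the quotient is a smooth compact manifold. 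Freeness is where I expect the one genuinely delicate point: a tuple $(x,v)$ fixed by some nontrivial $(g_{\phi,\epsilon},h_t)$ would force a degenerate configuration, and ruling this out uses that $x,v$ span a genuine real $2$-plane (they are $\mathbb R$-linearly independent by orthonormality) together with the horizontality constraint; I would argue that any such stabilizer element must act trivially already on the $\mathbb C$-span, forcing $\phi=0,\epsilon=1,t=0$.

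Finally, the dimension count: $\dim\big(O(2)\backslash O^H_2(2,k-1)/S^1\big)=\dim O^H_2(2,k-1)-\dim O(2)-\dim S^1=(4k-8)-1-1=4k-10$, matching the claim. Since the quotient of a compact manifold by a proper free action is compact, $\Gamma(\Sigma_2^k)$ is a compact manifold of dimension $4k-10$, and transporting the point-set structure through the bijection above endows $\Gamma(\Sigma_2^k)$ with this canonical structure. The main obstacle, as noted, is the freeness of the double action; everything else is either classical (Stiefel manifolds, quotients by compact group actions) or a routine regular-value computation. I would also remark that for the borderline small cases ($k=3$, giving dimension $2$) the same count applies, and that the identification is independent of the chosen centering (sub-Helmert) since different centerings are related by a fixed isometry of $F_2^k$.
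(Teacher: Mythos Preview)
The paper does not prove this theorem; it explicitly quotes it from \cite{HT06} (``With this notation we take the following from \cite{HT06}''), so there is no proof in the paper to compare against. Your overall strategy---bijection with the double quotient, regular-value argument for $O_2^H(2,k-1)$, quotient by a compact group action, dimension count---is the natural one and is consistent with how the paper handles the surrounding material.

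There is, however, a genuine error in your freeness step. The combined $S^1\times O(2)$-action on $O_2^H(2,k-1)$ is \emph{not} free: the element $(h_\pi,g_{\pi,1})$ acts by $(x,v)\mapsto e^{i\pi}(-x,-v)=(x,v)$, so every point has isotropy at least $\mathbb Z/2=\{(h_0,g_{0,1}),(h_\pi,g_{\pi,1})\}$. The paper itself records exactly this fact in the proof of Theorem~\ref{Ziezold.Dist:th}: ``on $F_2^{k}\times F_2^{k}$ the isotropy groups are $\{(g_{0,1},h_0),(g_{\pi,1},h_{\pi})\}$.'' Your proposed argument (``any such stabilizer element must act trivially already on the $\mathbb C$-span, forcing $\phi=0,\epsilon=1,t=0$'') therefore cannot go through as stated.

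The fix is standard but must be made explicit: show that the isotropy is \emph{constant}, equal to this $\mathbb Z/2$ at every point (your linear-independence idea does this once you use that $x,v,ix,iv$ are pairwise orthogonal on $O_2^H(2,k-1)$ and hence $\mathbb R$-linearly independent for $k\geq 3$). Then the ineffective kernel $\mathbb Z/2$ is normal in $S^1\times O(2)$, the induced action of $(S^1\times O(2))/\mathbb Z_2$ is free, and the quotient is a manifold of dimension $(4k-8)-2=4k-10$ as you computed. This is the route the paper implicitly invokes via the Principal Orbit Theorem in the proof of Theorem~\ref{Ziezold.Dist:th}.
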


	In analogy to the naming of the \emph{pre-shape sphere} $S_2^k$ call $O^H_2(2,k-1)$ the space of \emph{pre-geodesics}.

	A simpler argument yields $\Gamma(S_2^k)$ 
	as the compact manifold
	\begin{eqnarray}\label{space-spherical-geod:eq}
	\Gamma(S_2^k)&\cong&O(2)\backslash O_2(2,k-1)\,.\end{eqnarray}
	
	\paragraph{Distance from shapes to geodesics and between geodesics}
	The spherical distance $r(p,\gamma) = \arccos\sqrt{\langle p,x\rangle^2 + \langle p,v\rangle^2}$ of a point $p$ to the geodesic $\gamma$ defined by $(x,v)\in O_2^H(2,k-1)$  naturally defines a distance 
	$$\rho\big([p],\frak{p} \circ\gamma) = \min_{e^{it}\in S^1} r(e^{it}p,\gamma)$$
	 of the shape $[p]$ to the geodesic $\frak{p} \circ\gamma$ in the shape space. For a shape $[p]\in \Sigma_2^k$ denote by 
	$$\Gamma^{\pi/4}_{[p]} = \left\{\gamma \in \Gamma(\Sigma_2^k): \rho\left([p],\gamma \right)<\frac{\pi}{4}\right\}$$ 
	the open set of geodesics closer to $[p]$ than $\pi/4$. The proof of the following Theorem \ref{rhosq-smoot:th} is deferred to Appendix \ref{rhosq-smoot:ap}.

	\begin{Th}\label{rhosq-smoot:th}
	 For fixed $p\in  S_2^k$ the function
	$$  \gamma \mapsto \rho\big([p],\gamma)^2$$
	is smooth on $\Gamma^{\pi/4}_{[p]}$.
	\end{Th}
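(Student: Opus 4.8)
The plan is to pull the function back to the manifold $O^H_2(2,k-1)$ of pre-geodesics and check smoothness there. By Theorem \ref{sp_of_geod_thm} (and \cite{HT06}) the canonical projection $\pi:O^H_2(2,k-1)\to O(2)\backslash O^H_2(2,k-1)/S^1\cong\Gamma(\Sigma^k_2)$ is a smooth submersion, so a function on $\Gamma(\Sigma^k_2)$ is smooth if and only if its pullback by $\pi$ is; and since $\rho([p],\cdot)^2$ is by construction a function of geodesics, that pullback is automatically invariant under the $O(2)\times S^1$--action (this can also be read off the closed form obtained below). Hence it suffices to show that the pullback of $\rho([p],\cdot)^2$ is smooth on $\pi^{-1}\big(\Gamma^{\pi/4}_{[p]}\big)$.

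First I would carry out the fibre minimisation explicitly. Fix $p\in S^k_2$ and let $(x,v)\in O^H_2(2,k-1)$ represent a geodesic $\gamma$, i.e. $\gamma=\frak{p}\circ\gamma_{x,v}$ with $\gamma_{x,v}(t)=x\cos t+v\sin t$. Since $\arccos\sqrt{\cdot}$ is strictly decreasing, the definition of $\rho$ gives $\rho([p],\gamma)^2=h\big(N(x,v)\big)$, where $h(s):=(\arccos\sqrt s)^2$ and $N(x,v):=\max_{t}\big(\langle e^{it}p,x\rangle^2+\langle e^{it}p,v\rangle^2\big)$. Writing $e^{it}p=(\cos t)\,p+(\sin t)\,ip$ one finds $\langle e^{it}p,x\rangle^2+\langle e^{it}p,v\rangle^2=c_0+c_1\cos 2t+c_2\sin 2t$ with
$$2c_0=\langle p,x\rangle^2+\langle ip,x\rangle^2+\langle p,v\rangle^2+\langle ip,v\rangle^2,$$
and $c_1,c_2$ being, for the fixed $p$, real quadratic polynomials in the entries of $x$ and $v$. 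Maximising over $t$ yields the closed form $N(x,v)=c_0+\sqrt{c_1^2+c_2^2}$; moreover $0\le N\le 1$, since $\langle\cdot,x\rangle^2+\langle\cdot,v\rangle^2$ is a squared orthogonal projection onto the real--orthonormal pair $\{x,v\}$.

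The only way $N$ can fail to be smooth in $(x,v)$ is at points with $(c_1,c_2)=(0,0)$, and the core of the proof is that no such point lies over $\Gamma^{\pi/4}_{[p]}$. If $(c_1,c_2)=(0,0)$ then $N=c_0$; but, using $\langle ip,w\rangle^2=\langle p,iw\rangle^2$,
$$2c_0=\langle p,x\rangle^2+\langle p,ix\rangle^2+\langle p,v\rangle^2+\langle p,iv\rangle^2\le\|p\|^2=1,$$
the inequality being Bessel's for the frame $\{x,v,ix,iv\}$ — which is real--orthonormal \emph{precisely because $\gamma$ is horizontal}: the horizontality relation $\langle x,iv\rangle=0$ (equivalently $\langle ix,v\rangle=-\langle x,iv\rangle=0$) supplies exactly the cross terms that are not automatic, the remaining pairings vanishing because multiplication by $i$ is an isometric complex structure. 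Hence $N=c_0\le\tfrac12$, i.e. $\rho([p],\gamma)\ge\arccos\tfrac1{\sqrt2}=\tfrac\pi4$, so $\gamma\notin\Gamma^{\pi/4}_{[p]}$. Contrapositively, $(c_1,c_2)\neq(0,0)$ on $\pi^{-1}\big(\Gamma^{\pi/4}_{[p]}\big)$; there $N$ is smooth and, since $\rho([p],\gamma)<\tfrac\pi4$ is equivalent to $N(x,v)>\tfrac12$, takes values in $\big(\tfrac12,1\big]$.

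Finally I would verify that $h(s)=(\arccos\sqrt s)^2$ is $C^\infty$ on a neighbourhood of $\big(\tfrac12,1\big]$. For $s<1$ this is immediate. At $s=1$ the naive computation fails because $\arccos$ is not differentiable there, so instead substitute $\sqrt s=\cos\theta$, giving $1-s=\sin^2\theta$ and $h(s)=\theta^2=(\arcsin\sqrt{1-s})^2=\psi(1-s)$, where $\psi(u)=u+\tfrac13 u^2+\cdots$ is real--analytic near $0$ (only even powers of $\sqrt u$ survive in $(\arcsin\sqrt u)^2$); thus $h$ extends smoothly across $s=1$. Composing, $h\circ N$ is smooth on $\pi^{-1}\big(\Gamma^{\pi/4}_{[p]}\big)$, and by the invariance and submersion observations of the first paragraph it descends to the asserted smooth function on $\Gamma^{\pi/4}_{[p]}$. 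I expect the main obstacle to be the non-vanishing of $(c_1,c_2)$ over $\Gamma^{\pi/4}_{[p]}$ — equivalently, that the threshold $\pi/4$ keeps the fibre-maximum $N$ strictly above $\tfrac12$ — which is the step that genuinely uses the non-Euclidean, horizontal structure of the problem; the endpoint behaviour of $h$ at $s=1$ is the only other point requiring (minor) care.
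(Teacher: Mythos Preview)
Your proof is correct and follows the same overall strategy as the paper: lift to pre-geodesics, identify where the fibre-optimisation becomes singular, and show that this singular set lies outside $\Gamma^{\pi/4}_{[p]}$. Your singularity set $(c_1,c_2)=(0,0)$ coincides with the paper's $M^0=\{D=0=A^2-B^2\}$ (indeed $2c_1=A^2-B^2$ and $2c_2=D$ in the paper's notation). Two points where your execution differs are worth noting. First, to place $M^0$ at distance $\ge\pi/4$ from $[p]$ the paper splits into cases and explicitly computes $\cos\rho([p],\gamma(s))$ along the geodesic, showing it is constant; your Bessel-inequality argument for the real-orthonormal frame $\{x,v,ix,iv\}$ --- which uses horizontality exactly where it is needed --- is shorter and more transparent, and gives the same bound $N\le\tfrac12$ in one line. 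Second, you explicitly verify that $h(s)=(\arccos\sqrt{s})^2$ extends smoothly across $s=1$, a point the paper takes for granted when asserting that $f_p$ is smooth; since geodesics through $[p]$ certainly lie in $\Gamma^{\pi/4}_{[p]}$, this check is not vacuous and your treatment is the more careful of the two.
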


	In order to measure the distance between geodesics equip $O(2)\backslash O_2^H(2,k-1)$ with a suitable Riemannian  structure -- two of such structures are straightforward, cf. \cite{EAS98}, or more simply, embed $O_2^H(2,k-1)$ in a Euclidean space and consider the quotient distance w.r.t. to the corresponding extrinsic metric. More precisely, we generalize a setup introduced by \cite{Z94} on the quotient $\Sigma_2^k$.

	\begin{Def}\label{Ziez_dist:def}
	The Euclidean distance
	$$ d(P,Q) := \sqrt{\|x-y\|^2+ \|v-w\|^2}\,$$
	for $P=(x,v), Q=(y,w) \in O_2^H(2,k-1) \subset  F_2^{k} \times  F_2^{k}$ defines the canonical quotient distance
	$$ \delta([P],[Q]) := \min_{\footnotesize\begin{array}{c}h,h'\in O(2),\\g,g'\in S^1\end{array}}d(gPh,g'Qh')\,. $$
	for $[P] [Q] \in \Gamma(\Sigma_2^k)$ called the \emph{Ziezold distance} on $\Gamma(\Sigma_2^k)$.
	\end{Def}

	\paragraph{The mean geodesic of shapes} In earlier work (\cite{HHM07}) establishing a general framework for geodesic principal component analysis, the mean geodesic of shapes has been called a \emph{first geodesic principal component}.
	 
	\begin{Def}
	\label{rho-mean:def}
	Suppose that $X,X_1,\ldots,X_n$ are i.i.d. random pre-shapes mapping from an abstract probability space $(\Omega,{\cal A},{\cal P})$ to $S_2^k$ equipped with its Borel $\sigma$-algebra. For $\omega\in \Omega$ call the geodesics $\gamma_n(\omega),\gamma^* \in \Gamma(\Sigma_2^k)$ the \emph{first sample and population  geodesic principal component} (GPC) of the sample $[X_1(\omega)],\ldots, [X_n(\omega)]$ and $[X]$, respectively, if  
	\begin{eqnarray*} \sum_{j=1}^n\rho([X_j(\omega)],\gamma_n(\omega))^2 &=& \min_{\gamma\in \Gamma(\Sigma_2^k)}\sum_{j=1}^n\rho([X_j(\omega)],\gamma)^2\,, \mbox{ for all }\omega\in \Omega\,,\\
	\mathbb E\left(\rho([X],\gamma^*)^2\right) &=&  \min_{\gamma\in \Gamma(\Sigma_2^k)}\mathbb E\left(\rho([X],\gamma)^2\right)\,.
	\end{eqnarray*} 
	The random set of all sample GPCs is denoted by $E^{(\rho)}_n(\omega)$, $E^{(\rho)}([X])$ is the set of all population GPCs.	
	\end{Def}

	\begin{Th}[Asymptotics for the mean geodesic of shapes]\label{complex_proj_geode_SLLN:thm}
	For i.i.d. random pre-shapes $X,X_1,\ldots,X_n$ the set of first sample GPCs $E^{(\rho)}_n(\omega)$ is a uniformly strongly consistent estimator of the set of first population GPCs $E^{(\rho)}([X])$ in the sense that for every $\epsilon >0$ and a.s. for every $\omega\in \Omega$ there is a number $n(\epsilon,\omega) \in \mathbb N$ such that
	$$\bigcup_{j=n}^\infty E^{(\rho)}_j(\omega)\subset \Big\{\gamma \in \Gamma(\Sigma_2^k): \delta\big(\gamma,E^{(\rho)}([X])\big) \leq \epsilon\Big\}\,.$$
	Moreover, 
	if $E^{(\rho)}([X])$ contains a unique element $\gamma^*$ contained in
	$$ \bigcap_{p \in {\rm Supp}(X)}\Gamma^{\pi/4}_{[p]}$$
	with the support ${\rm Supp}(X)$ of $X$, if $\gamma_n \in E^{(\rho)}_n(\omega)$ is a measurable selection and  $x=\phi(\gamma)\in \mathbb R^{4k-10}$ are local coordinates near $\gamma^*$  with $\phi(\gamma^*)=0$, then 
	$$A\sqrt{n}~\phi(\gamma_n)~ \stackrel{}{\to}~ {\cal N}(0,\Sigma)\mbox{~~in distribution}$$
	with the $(4k-10)$-dimensional normal distribution ${\cal N}(0,\Sigma)$ with zero mean and covariance matrix
	$\Sigma = \cov\big({\rm grad}_x\rho(X,\gamma^*)^2\big)$ and $A = \E(H_x\rho(X,\gamma^*)^2\big)\big)$. Here, grad$_x$ and $H_x$ denote the gradient and Hessian of $\rho^2$, respectively, w.r.t. the coordinate $x$. 
	\end{Th}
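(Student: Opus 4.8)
We treat the two assertions in turn; write $F_n(\gamma)=\frac1n\sum_{j=1}^n\rho([X_j],\gamma)^2$ and $F(\gamma)=\E\big(\rho([X],\gamma)^2\big)$, so that $E^{(\rho)}_n(\omega)=\argmin_\gamma F_n$ and $E^{(\rho)}([X])=\argmin_\gamma F$. For the strong consistency the plan is to view the GPC sets as Fr\'echet-type M-estimators and to invoke the strong laws of \citet{Z77} and \citet{BP03}, extended as in the appendix. By Theorem \ref{sp_of_geod_thm} the space $\Gamma(\Sigma_2^k)$ is a compact manifold, which we metrize by the Ziezold distance $\delta$ of Definition \ref{Ziez_dist:def}; the loss $(p,\gamma)\mapsto\rho([p],\gamma)^2$ is jointly continuous and, by compactness, uniformly bounded (indeed $\rho\le\pi/2$). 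First I would establish the uniform strong law $\sup_\gamma|F_n(\gamma)-F(\gamma)|\to 0$ almost surely, by a standard equicontinuity/finite-net argument exploiting compactness of $\Gamma(\Sigma_2^k)$ together with the i.i.d.\ assumption; a routine $\argmin$-continuity argument then shows that a.s.\ every limit point of a sequence of sample GPCs minimizes $F$, and since $E^{(\rho)}([X])$ is closed in the compact space $\Gamma(\Sigma_2^k)$ this yields the stated ``eventually contained in an $\epsilon$-neighbourhood'' form. The appendix supplies the verification that the original arguments of \cite{Z77,BP03}, phrased for squared-distance Fr\'echet means with data and estimand in the same space, transfer essentially verbatim to the present setting where the data lie in $\Sigma_2^k$, the estimand ranges over $\Gamma(\Sigma_2^k)$, and the loss is $\rho^2$.

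For the central limit theorem assume $E^{(\rho)}([X])=\{\gamma^*\}$ with $\gamma^*\in\bigcap_{p\in{\rm Supp}(X)}\Gamma^{\pi/4}_{[p]}$, and let $\gamma_n$ be a measurable selection. The strategy is to check the hypotheses of the M-estimator CLT of \cite{H_Procrustes_10}. The crucial preliminary step is to pass from the pointwise condition $\gamma^*\in\Gamma^{\pi/4}_{[p]}$ to a \emph{single} chart on which smoothness holds for all relevant realizations simultaneously: since $p\mapsto\rho([p],\gamma^*)$ is continuous and ${\rm Supp}(X)$ is compact (closed in the compact pre-shape sphere), $c:=\sup_{p\in{\rm Supp}(X)}\rho([p],\gamma^*)<\pi/4$, so by joint continuity of $\rho$ there is a neighbourhood $U\ni\gamma^*$, which we may take to be $\phi^{-1}(\text{ball})$, with $\rho([p],\gamma)<\pi/4$ for all $\gamma\in U$ and $p\in{\rm Supp}(X)$; thus $U\subset\Gamma^{\pi/4}_{[p]}$ for every such $p$. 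On this common domain Theorem \ref{rhosq-smoot:th} makes $x\mapsto g(x,\omega):=\rho\big([X(\omega)],\phi^{-1}(x)\big)^2$ smooth for a.e.\ $\omega$, and joint continuity of $\grad_x g$ and $H_x g$ in $(x,p)$ on a compact sub-chart times ${\rm Supp}(X)$ bounds these derivatives uniformly, hence integrably -- exactly the domination that lets one differentiate $F(x)=\E(g(x,\cdot))$ twice under the integral sign, so in particular $\grad_x F(0)=\E(\grad_x g(0,\cdot))=0$ and $H_x F(0)=A$.

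With these ingredients the argument is classical. By the consistency just proved $\phi(\gamma_n)\to 0$ a.s., so eventually $\gamma_n\in U$ and $\phi(\gamma_n)$ is an interior minimizer of $x\mapsto\frac1n\sum_j g(x,\omega_j)$, whence the first-order condition $\frac1n\sum_j\grad_x g(\phi(\gamma_n),\omega_j)=0$ holds eventually. A first-order Taylor expansion of the sample score about $0$ gives
$$0=\frac1n\sum_{j=1}^n\grad_x g(0,\omega_j)+\Big(\frac1n\sum_{j=1}^n H_x g(\tilde x_n,\omega_j)\Big)\phi(\gamma_n)$$
with $\tilde x_n$ on the segment from $0$ to $\phi(\gamma_n)$; since $\tilde x_n\to 0$ a.s., the uniform SLLN and continuity of $H_x g$ yield $\frac1n\sum_j H_x g(\tilde x_n,\omega_j)\to A$ a.s., which -- assuming, as in \cite{H_Procrustes_10}, that $A$ is non-singular, i.e.\ that $\gamma^*$ is a non-degenerate minimum -- is eventually invertible. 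The i.i.d.\ summands $\grad_x g(0,\omega_j)$ have mean $0$ and finite covariance $\Sigma=\cov(\grad_x g(0,\cdot))$, so the multivariate CLT gives $\frac1{\sqrt n}\sum_j\grad_x g(0,\omega_j)\to{\cal N}(0,\Sigma)$; solving the displayed identity for $\sqrt n\,\phi(\gamma_n)$ and applying Slutsky's lemma gives $\sqrt n\,\phi(\gamma_n)\to{\cal N}(0,A^{-1}\Sigma A^{-1})$, equivalently $A\sqrt n\,\phi(\gamma_n)\to{\cal N}(0,\Sigma)$ since $A$ is symmetric, which is the claim.

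The Taylor/Slutsky machinery is routine; the real content -- and the reason Theorem \ref{rhosq-smoot:th} is proved separately -- is the smoothness of $\gamma\mapsto\rho([p],\gamma)^2$ near the data together with the compactness argument upgrading it to a common chart $U$ with integrable derivative bounds, since without this one can neither differentiate under the integral nor apply the cited CLT. A subtle point there is that the minimization over $S^1$ defining $\rho$ could a priori destroy smoothness; this is exactly where the $\pi/4$ threshold enters, guaranteeing that the optimal alignment is unique and varies smoothly with $\gamma$.
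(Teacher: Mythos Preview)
Your proposal is correct and follows the same route as the paper: strong consistency via the generalized Ziezold/Bhattacharya--Patrangenaru theorems of the appendix (with $Q=S_2^k$, $P=\Gamma(\Sigma_2^k)$, loss $\rho^2$), and the CLT via Theorem \ref{rhosq-smoot:th} combined with compactness of $\Gamma(\Sigma_2^k)$ feeding into the M-estimator CLT of \cite{H_Procrustes_10}. The paper's own proof is a two-sentence citation of exactly these ingredients; you have additionally spelled out the common-chart argument (upgrading the pointwise $\pi/4$ hypothesis to a uniform neighbourhood $U$ via compactness of ${\rm Supp}(X)$) and the Taylor/Slutsky mechanics behind the cited CLT, which is helpful but not a different approach. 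One small remark: your explicit Taylor derivation invokes invertibility of $A$ to pass from $A_n\sqrt n\,\phi(\gamma_n)\Rightarrow{\cal N}(0,\Sigma)$ to $A\sqrt n\,\phi(\gamma_n)\Rightarrow{\cal N}(0,\Sigma)$, whereas the theorem as stated (and the cited CLT) is phrased directly for $A\sqrt n\,\phi(\gamma_n)$ without that extra hypothesis; you flag this correctly, and in the applications (Remark \ref{estimate_cov:rm}) non-singular $A$ is the working case anyway.
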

	
	\begin{proof}
	The assertion of strong consistency is a consequence of the general Theorem \ref{ext_Ziezold:thm} and Theorem \ref{ext_BP:thm} in the appendix. Since $\rho([X],\gamma)^2$ is smooth in $\gamma$ as long as $\gamma$ is closer to $[X]$ than $\pi/4$, by Theorem \ref{rhosq-smoot:th}, the assertion of the central limit theorem (CLT) follows from the CLT of \citet[Theorem A.1]{H_Procrustes_10} since $\Gamma(\Sigma_2^k)$ is compact.
	\end{proof}

	\begin{Rm}\label{estimate_cov:rm}
	For practical applications of Theorem \ref{complex_proj_geode_SLLN:thm}, in a given chart 
	$A$ and $\Sigma$ could be estimated 
	by classical numerical and multivariate methods. Alternatively, estimates can be obtained simply from the data's covariance in a chart around a sample mean. In particular in case of non-singular $A$, that covariance tends asymptotically to $A^{-1}\Sigma (A^{-1})^T$.
	\end{Rm}

	\paragraph{Uniqueness and location of the first GPC}  The hypothesis of a unique first GPC is essential for the following framework. Clearly, that hypothesis translates to an anisotropy condition on the random shape. E.g. on the basis of the geodesic hypothesis for biological growth as detailed in Section \ref{Forest:scn}, we may assume uniqueness in the application in Section \ref{App:scn}. The development of a test for specific anisotropy would certainly be of merit for other potential applications. By definition, every first GPC will be close to the support of $[X]$. 
	Data analysis and numerical simulations show that the intrinsic mean is usually very close to the first GPC (e.g. \cite{HT06,HHM07}). Moreover, for sufficient concentration, the intrinsic mean is unique and contained in a ball around the support of radius $\pi/4$ (cf. \cite{KWS90,L01}). 
	Certainly, further research is necessary to tackle questions of uniqueness and location.

\section{The Ziezold Mean of a Random First GPC}\label{Asymp:scn}

	We are now in the situation of having samples of first sample GPCs and to determine their Fr\'echet mean w.r.t. to some distance. In order to apply a CLT we are aiming for a mean in a smooth sense. It turns out that the comparatively simple Ziezold distance features the desired smoothness. 

	\begin{Th}\label{Ziezold.Dist:th} The following hold:
	\begin{enumerate}
	 \item[(i)] the action of $S^1$ and $O(2)$ is isometric with respect to $d$, i.e. $d(P,Q)=d(gPh,gQh)$ for all $P,Q\in O_2^H(2,k-1)$ and all $g\in S^1,h\in O(2)$,
	\item[(ii)] $\delta^2$ is smooth and $\delta$ is a metric on  $\Gamma(\Sigma_2^k)$. 
	\end{enumerate}
	\end{Th}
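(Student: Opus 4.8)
The plan is as follows.

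\emph{Part (i)} is a direct computation. The left action of $g=e^{it}\in S^1$ is multiplication of the complex row vectors $x,v$ by a scalar of modulus one, which leaves the real inner product $\langle z_1,z_2\rangle=\re(z_1z_2^*)$, and hence $d$, invariant. For the right action of $h=g_{\phi,\epsilon}\in O(2)$, put $a:=x-y$, $b:=v-w$; then $Ph$ and $Qh$ differ by $(a\cos\phi+b\sin\phi,\ \epsilon b\cos\phi-\epsilon a\sin\phi)$, and expanding $\|a\cos\phi+b\sin\phi\|^2+\|\epsilon b\cos\phi-\epsilon a\sin\phi\|^2$ the terms carrying $\langle a,b\rangle$ cancel because of the sign $\epsilon$ while the factor $\epsilon^2=1$ on the remaining terms leaves $\|a\|^2+\|b\|^2=d(P,Q)^2$. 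Since the $S^1$-action and the $O(2)$-action commute, the two computations combine to $d(gPh,gQh)=d(P,Q)$.

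Granting (i), \emph{that $\delta$ is a metric} is formal. Compactness of $O(2)\times S^1$ and continuity of $d$ make the infimum in Definition~\ref{Ziez_dist:def} a minimum, attained at some optimal positioning; hence $\delta([P],[Q])=0$ forces $gPh=g'Qh'$ for suitable group elements, i.e. $[P]=[Q]$, and the reverse implication together with symmetry is read off from the corresponding properties of $d$. The triangle inequality is the standard argument for a quotient metric of a compact isometric group action: choosing optimal positionings for $\delta([P],[Q])$ and for $\delta([Q],[R])$, one moves the representative of $[Q]$ occurring in the first pair onto the representative of $[Q]$ occurring in the second by a single group element, which by (i) does not change the first Euclidean distance, and then applies the triangle inequality for $d$.

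The substantive statement is \emph{smoothness of $\delta^2$}, and I would derive it from an explicit reduction of the optimization in the spirit of \cite{Z94,H_Procrustes_10}. Using (i) to absorb one $S^1$- and one $O(2)$-factor, $\delta([P],[Q])^2 = \min_{t,\phi,\epsilon} d\big(e^{it}(x,v)g_{\phi,\epsilon},(y,w)\big)^2 = 4-2\sup_{t,\phi,\epsilon}\re\!\big(e^{it}c_\epsilon(\phi)\big)$, where, writing $\alpha_\epsilon:=xy^*+\epsilon\,vw^*$ and $\beta_\epsilon:=vy^*-\epsilon\,xw^*$, the $\mathbb C$-valued quantity $c_\epsilon(\phi)=\alpha_\epsilon\cos\phi+\beta_\epsilon\sin\phi$ depends smoothly (indeed polynomially) on $P,Q$. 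The supremum over $t$ gives $|c_\epsilon(\phi)|$; the supremum over $\phi$ of $|c_\epsilon(\phi)|^2$ is the larger eigenvalue of the real symmetric positive semidefinite $2\times2$ matrix $M_\epsilon$ with diagonal entries $|\alpha_\epsilon|^2,|\beta_\epsilon|^2$ and off-diagonal entry $\re(\alpha_\epsilon\overline{\beta_\epsilon})$, i.e. $\tfrac12\big(\tr M_\epsilon+\sqrt{(\tr M_\epsilon)^2-4\det M_\epsilon}\,\big)$; and there remains a maximum over the two values $\epsilon=\pm1$. Thus $\delta^2$ is a composition of smooth maps wherever three non-degeneracies hold: $\max_\epsilon\lambda_{\max}(M_\epsilon)>0$ (failing only when all four Hermitian products $xy^*,vw^*,vy^*,xw^*$ vanish, i.e. at maximal distance $\delta=2$); the radicand $(\tr M_\epsilon)^2-4\det M_\epsilon>0$ for the dominant $\epsilon$ (no eigenvalue collision, i.e. $M_\epsilon$ not a multiple of the identity); and strict dominance of one $\epsilon$-branch. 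At $[Q]=[P]$ one has $M_{+1}$ with eigenvalues $4,0$ and $M_{-1}=0$, so all three hold with the $+1$-branch dominant, and by openness they persist on a neighbourhood of the diagonal — indeed on the open set where the optimal positioning is unique, on which $\delta^2$ is then smooth. Finally this descends from $O_2^H(2,k-1)\times O_2^H(2,k-1)$ to $\Gamma(\Sigma_2^k)\times\Gamma(\Sigma_2^k)$: the function is invariant under the double action by construction, and by Theorem~\ref{sp_of_geod_thm} the canonical projection is a submersion, so invariant smooth functions descend to smooth functions.

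I expect the book-keeping around the two degeneracies — the eigenvalue collision inside $M_\epsilon$ and the tie between the two $\epsilon$-branches (the analogue, for this quotient distance, of the cut-locus failure of smoothness of the intrinsic metric) — to be the main obstacle: one must check that the non-smooth set is closed and of positive codimension, and in particular avoids a full neighbourhood of the diagonal. The last point is all that the central limit theorem for the Ziezold mean geodesic targeted in this section actually needs, since there the minimizing geodesic lies close to the data.
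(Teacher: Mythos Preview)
Your proof of (i) and of the metric axioms for $\delta$ is correct and essentially the same as the paper's: both note that the $S^1$- and $O(2)$-actions are isometric already on the ambient Euclidean space, and the metric axioms follow formally from isometry plus compactness of the acting group.

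For the smoothness of $\delta^2$, however, you take a genuinely different route from the paper. The paper argues structurally: it observes that on $F_2^k\times F_2^k$ the joint $S^1\times O(2)$-action has constant isotropy $\{(g_{0,1},h_0),(g_{\pi,1},h_\pi)\}\cong\mathbb Z/2$, so by the Principal Orbit Theorem the double quotient $O(2)\backslash(F_2^k\times F_2^k)/S^1$ is a smooth manifold and $\delta$ is its natural quotient Riemannian distance; smoothness of $\delta^2$ (away from the cut locus, in particular near the diagonal) is then the standard fact for Riemannian distance, and restricts to the submanifold $\Gamma(\Sigma_2^k)$. Your approach is computational: you carry out the $S^1$- and $O(2)$-optimizations explicitly, reducing $4-\delta^2$ to twice the square root of the top eigenvalue of a $2\times2$ matrix $M_\epsilon$ with polynomial entries, maximized over $\epsilon=\pm1$, and then read off the non-smooth locus as the union of the eigenvalue-collision set, the $\epsilon$-tie set, and the zero set. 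This is in fact the same calculation that the paper performs \emph{later}, in Theorem~\ref{opt_pos:th}, to obtain the optimal positioning needed for the algorithm---you are simply front-loading it to serve as the smoothness proof as well.

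What each approach buys: the paper's argument is short and conceptual but imports the Principal Orbit Theorem and is somewhat imprecise about the domain of smoothness (the squared geodesic distance is of course not globally smooth). Your argument is self-contained, makes the non-smooth set explicit, and correctly isolates that smoothness on a neighbourhood of the diagonal---indeed on the open set where optimal positioning is unique---is exactly what the CLT of Theorem~\ref{Zie_mean:thm} requires. Your explicit verification at $[Q]=[P]$ (where $M_{+1}$ has simple spectrum $\{4,0\}$ and $M_{-1}=0$) is correct and cleanly separates the two $\epsilon$-branches there.
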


	\begin{proof} Property (i) is easily verified, in fact, the left-action of $S^1$ and right-action of $O(2)$ are even isometric on the ambient $\mathbb C^{k-1} \times \mathbb C^{k-1}$. Moreover on $F_2^{k} \times F_2^{k}$ the isotropy groups are $\{(g_{0,1},h_0),(g_{\pi,1},h_{\pi})\}$. For this reason in consequence of the Principal Orbit Theorem (e.g. \citet[Chapter IV.3]{Bre72}), $\delta$ extends to the natural geodesic quotient metric on the manifold $O(2)\backslash(F_2^{k}\times F_2^{k})/S^1$. Hence in particular, $\delta^2$ is smooth on the submanifold $\Gamma(\Sigma_2^k)$. As another consequence, since the extension of $\delta$ is a metric, $\delta$ itself is a metric which yields (ii).
	\end{proof}


	In view of the application in Section \ref{App:scn}, we now consider samples of independent random geodesics obtained from not necessarily independent shapes as typically occur during observation of growth. In particular, the test for common geodesics devised in Section \ref{Tests:scn} relies on the following Theorem \ref{Zie_mean:thm}.

	\begin{Def}[The mean geodesic of geodesics]
	Call $\gamma^*\in \Gamma(\Sigma_2^k)$
	\begin{itemize}
	 \item[]
	a population \emph{Ziezold mean geodesic} of a random geodesic $\Xi$ if
	$$ \E\big(\delta(\Xi,\gamma^*)^2\big) = \min_{\gamma\in \Gamma(\Sigma_2^k)} \E\big(\delta(\Xi,\gamma)^2\big)\,,$$  
	\item[] a sample \emph{Ziezold mean geodesic} of random geodesics $\Xi_1,\ldots,\Xi_n$ if
	$$ \sum_{j=1}^n\delta(\Xi_j(\omega),\gamma^*)^2\big) = \min_{\gamma\in \Gamma(\Sigma_2^k)} \sum_{j=1}^n\delta(\Xi_j(\omega),\gamma)^2\big)\,.$$ 
	\end{itemize}
	The sets of population and sample Ziezold mean geodesics are denoted by $E^{(\delta)}(\Xi)$ and $E^{(\delta)}_n(\omega)$, respectively.
	\end{Def}

	\begin{Th}[Asymptotics for the mean geodesic of geodesics]\label{Zie_mean:thm}
	For i.i.d. random geodesics $\Xi,\Xi_1,\ldots,\Xi_n$ the set of sample Ziezold mean geodesics $E^{(\delta)}_n(\omega)$ is a uniformly strongly consistent estimator of the set of population Ziezold mean geodesics $E^{(\delta)}(\Xi)$ in the sense that for every $\epsilon >0$ and a.s. for every $\omega\in \Omega$ there is a number $n(\epsilon,\omega) \in \mathbb N$ such that
	$$\bigcup_{j=n}^\infty E^{(\delta)}_j(\omega)\subset \Big\{\gamma \in \Gamma(\Sigma_2^k): \delta\big(\gamma,E^{(\delta)}(\Xi)\big) \leq \epsilon\Big\}\,.$$
	If $E^{(\delta)}(\Xi)$ contains a unique element $\gamma^*$, $\gamma_n \in E^{(\delta)}_n(\omega)$ is a measurable selection and  $x=\phi(\gamma)\in \mathbb R^{4k-10}$ are local coordinates near $\gamma^*$  with $\phi(\gamma^*)=0$, then 
	$$A\sqrt{n}~\phi(\gamma_n)~ \stackrel{}{\to}~ {\cal N}(0,\Sigma)\mbox{~~in distribution}$$
	with the $(4k-10)$-dimensional normal distribution ${\cal N}(0,\Sigma)$ with zero mean and covariance matrix
	$\Sigma = \cov\big({\rm grad}_x\delta(\Xi,\gamma^*)^2\big)$ and $A = \E(H_x\delta(\Xi,\gamma^*)^2\big)\big)$. Here, grad$_x$ and $H_x$ denote the gradient and Hessian of $\delta^2$, respectively, w.r.t. the coordinate $x$. 
	\end{Th}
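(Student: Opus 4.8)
The plan is to deduce Theorem~\ref{Zie_mean:thm} from the same two-step machinery that was used for Theorem~\ref{complex_proj_geode_SLLN:thm}: first strong consistency via the appendix's extensions of the Ziezold and Bhattacharya--Patrangenaru arguments, then a central limit theorem via \citet[Theorem A.1]{H_Procrustes_10}. Concretely, I would first observe that $\Gamma(\Sigma_2^k)$ is a compact manifold (Theorem~\ref{sp_of_geod_thm}) and that $\delta$ is a metric on it (Theorem~\ref{Ziezold.Dist:th}(ii)), so that the Fr\'echet functional $\gamma\mapsto\E(\delta(\Xi,\gamma)^2)$ is finite and continuous and the sets $E^{(\delta)}(\Xi)$, $E^{(\delta)}_n(\omega)$ are nonempty. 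Uniform strong consistency then follows verbatim from the general consistency theorems in the appendix (the analogues of \citet{Z77} and \citet{BP03}), applied with sample space $\Gamma(\Sigma_2^k)$, the random variable $\Xi$, and the loss function $\delta^2$; the only thing to check is that $\delta^2$ satisfies the measurability and continuity hypotheses of those theorems, which is immediate from Theorem~\ref{Ziezold.Dist:th}.

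For the CLT part, the key input is smoothness of $\delta^2$, which is exactly Theorem~\ref{Ziezold.Dist:th}(ii): $\delta^2$ is smooth on all of $\Gamma(\Sigma_2^k)\times\Gamma(\Sigma_2^k)$, hence in particular in the second argument near the unique population mean $\gamma^*$, uniformly enough to differentiate under the expectation. Given a chart $\phi$ near $\gamma^*$ with $\phi(\gamma^*)=0$, one sets $F_n(x)=\frac1n\sum_j\delta(\Xi_j,\phi^{-1}(x))^2$ and $F(x)=\E(\delta(\Xi,\phi^{-1}(x))^2)$. Since $\gamma^*$ is the unique minimizer of $F$ and lies in the interior of the chart, $\grad_x F(0)=0$, and the Hessian $A=\E(H_x\delta(\Xi,\gamma^*)^2)$ is the second-derivative matrix at the minimum. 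A measurable selection $\gamma_n$ of sample minimizers converges a.s.\ to $\gamma^*$ by the consistency already proved, so eventually $\phi(\gamma_n)$ solves the estimating equation $\grad_x F_n(\phi(\gamma_n))=0$. A Taylor expansion of $\grad_x F_n$ about $0$, together with the classical multivariate CLT applied to the i.i.d.\ vectors $\grad_x\delta(\Xi_j,\gamma^*)^2$ (which have mean $\grad_x F(0)=0$ and covariance $\Sigma$) and a uniform law of large numbers for the Hessian, yields $A\sqrt n\,\phi(\gamma_n)\to{\cal N}(0,\Sigma)$. All of this is precisely the content of \citet[Theorem A.1]{H_Procrustes_10}, whose hypotheses---compact manifold, unique Fr\'echet mean, smooth squared distance near the mean---are met here.

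The main obstacle, and essentially the only nontrivial point, is verifying that the hypotheses of the cited CLT are genuinely satisfied in the present setting, i.e.\ that the smoothness furnished by Theorem~\ref{Ziezold.Dist:th}(ii) is of the right kind. Two subtleties deserve a remark. First, $\delta$ is defined by a minimization over the compact group $O(2)\times S^1$ with finite generic isotropy; the argument in the proof of Theorem~\ref{Ziezold.Dist:th} (via the Principal Orbit Theorem) shows $\delta^2$ extends smoothly to the ambient quotient manifold, but one should note that this gives joint smoothness on a neighborhood, so the dominated-convergence interchange of $\grad_x$, $H_x$ with $\E$ is justified by compactness of $\Gamma(\Sigma_2^k)$ and boundedness of the derivatives. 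Second, unlike in Theorem~\ref{complex_proj_geode_SLLN:thm}, no $\pi/4$-type localization is needed here, since $\delta^2$ is smooth everywhere on $\Gamma(\Sigma_2^k)$; this is exactly why the Ziezold distance was chosen. Hence the proof reduces to assembling these ingredients, and I would write it in two sentences mirroring the proof of Theorem~\ref{complex_proj_geode_SLLN:thm}: consistency from the appendix, CLT from \citet[Theorem A.1]{H_Procrustes_10} using Theorem~\ref{Ziezold.Dist:th}(ii) and compactness of $\Gamma(\Sigma_2^k)$.
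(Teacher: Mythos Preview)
Your proposal is correct and follows essentially the same route as the paper. The only cosmetic difference is that for strong consistency the paper does not invoke the appendix's extended Theorems~\ref{ext_Ziezold:thm} and~\ref{ext_BP:thm}: since $\delta$ is a genuine \emph{metric} on the compact space $\Gamma(\Sigma_2^k)$ (Theorem~\ref{Ziezold.Dist:th}(ii)), the original results of \citet{Z77} and \citet[Remark~2.5]{BP03} already apply directly, whereas the appendix extensions were needed for Theorem~\ref{complex_proj_geode_SLLN:thm} because $\rho$ there is only a distance. For the CLT the paper argues exactly as you do, citing smoothness of $\delta^2$ and compactness of $\Gamma(\Sigma_2^k)$ to invoke \citet[Theorem~A.1]{H_Procrustes_10}, and also remarks that the CLT of \cite{BP05} is not available here because $\delta$ is neither intrinsic nor extrinsic.
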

	
	\begin{proof}
	Since $\delta$ is a metric by Theorem \ref{Ziezold.Dist:th}, the assertion on strong consistency is a consequence of \citet{Z77} as \citet[Remark 2.5]{BP03} teach. Since $\delta$ is neither an intrinsic nor an extrinsic metric, the CLT of \cite{BP05} cannot be applied. Rather, the assertion of the CLT follows from the more general CLT of \citet[Theorem A.1]{H_Procrustes_10}, since by Theorem \ref{Ziezold.Dist:th}, $\delta^2$ is smooth and $\Gamma(\Sigma_2^k)$ is compact.
	\end{proof}

	For practical applications of Theorem \ref{Zie_mean:thm} proceed as detailed in Remark \ref{estimate_cov:rm}.

	For $P,Q \in O^H_2(2,k-1)$, $g\in S^1$ and $h\in O(2)$ call $gQh$ is in \emph{optimal position} to $P$, if $d(P,gQh) = \delta([P],[Q])$. Since both groups $O(2)$ and $S^1$ are compact, given $P\in O_2(2,k-1)$, every $Q\in O_2(2,k-1)$, can be placed into optimal position to $P$. Moreover, if $[P^*]$ is the unique Ziezold mean geodesic of sampled geodesics $[P_1],\ldots,[P_n]$ then $P^*$ is the extrinsic mean of the $g_jP_jh_j$ placed into optimal position to $P^*$, $g_j \in S^1$, $h_j\in O(2)$, $j=1,\ldots,n$, i.e.
	$$P^* = \argmin_{P\in O_2(2,k-1)} \sum_{j=1}^n\min_{\footnotesize\begin{array}{c}h_j\in O(2),\\g_j\in S^1\end{array}}d(P,g_jP_jh_j)\,,$$
	cf. \cite{H_meansmeans_10}. The extrinsic mean then is the orthogonal projection to $O^H_2(2,k-1)$ of the classical Euclidean mean in ambient $M(2,k-1)\times M(2,k-1)$, cf. \cite{HL98}; \cite{BP03}.

	In the first step we solve the problem of optimally positioning analytically, in the second step we compute the orthogonal projection. Based on the two, the algorithm of \cite{Z94} is adapted, to compute the Ziezold mean geodesic. 

	\begin{Th}\label{opt_pos:th}
	 Let $P=(x,v),Q=(y.w) \in O_2^H(2,k-1)$ and define
	$$\begin{array}{rclcrcl}
	   A &:=& \langle x,y\rangle + \epsilon \langle v,w\rangle&,&B&:=&\langle x,w\rangle -\epsilon \langle v,y\rangle\,, \\
	C&:=&\langle x,iy\rangle + \epsilon \langle v,iw\rangle&,&D&:=&\langle x,iw\rangle - \epsilon\langle v,iy\rangle\,.
	  \end{array}$$
	Then, for $g_{\phi,\epsilon},h_t$ putting $Q$ into optimal position  $g_tQh_{\phi,\epsilon}$ to $P$, it is necessary that 
	$$\tan \phi = \frac{B  + D\tan t}{A  +  C \tan t} $$ and that $t$ satisfies
	\begin{enumerate}
	 \item[(i)] $$\tan t = \alpha \pm \sqrt{\alpha^2+1}, \mbox{ with } \alpha = \frac{C^2+D^2-A^2-B^2}{2(AC+BD)}$$ in case of $AC+BD\neq 0$,
	 \item[(ii)] $t=0$ in case of $AC+BD=0\neq A^2+B^2-C^2+D^2$.
	\end{enumerate}
	$-\pi/2\leq t<\pi/2 $ may be arbitrary in case of $AC+BD=0= A^2+B^2-C^2+D^2$.
	\end{Th}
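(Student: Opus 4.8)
The plan is to convert the optimal--positioning problem into an unconstrained maximisation in the two real parameters $t$ and $\phi$ (the sign $\epsilon\in\{\pm1\}$ being handled afterwards by direct comparison of the two resulting values), and then to locate its stationary points by one--variable calculus.

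First I would expand the squared Euclidean distance. For every $t,\phi,\epsilon$ the two components of $g_tQh_{\phi,\epsilon}$ are again an orthonormal pair — the right $O(2)$--action rotates the pair $(y,w)$, which is orthonormal since $\|y\|=\|w\|=1$ and $\langle y,w\rangle=0$, and the left action by $e^{it}$ preserves norms — so, since also $\|x\|=\|v\|=1$,
$$d\big(P,\,g_tQh_{\phi,\epsilon}\big)^2 \;=\; 4 \;-\; 2\,f(t,\phi,\epsilon),$$
where $f$ collects the four mixed inner products arising from the cross terms. Using the identity $\langle z,e^{it}z'\rangle=\cos t\,\langle z,z'\rangle+\sin t\,\langle z,iz'\rangle$ (immediate from the fact that the Hermitian product of $z$ and $z'$ equals $\langle z,z'\rangle+i\langle z,iz'\rangle$) together with $(y,w)g_{\phi,\epsilon}=(y\cos\phi+w\sin\phi,\ w\epsilon\cos\phi-y\epsilon\sin\phi)$, a straightforward rearrangement sorts the eight resulting terms by the monomials $\cos\phi\cos t,\ \cos\phi\sin t,\ \sin\phi\cos t,\ \sin\phi\sin t$ and yields precisely
$$f(t,\phi,\epsilon) \;=\; \cos\phi\,\big(A\cos t + C\sin t\big) \;+\; \sin\phi\,\big(B\cos t + D\sin t\big),$$
with $A,B,C,D$ the quantities defined in the statement. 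Thus minimising $d^2$ is the same as maximising $f$.

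For fixed $t$ the map $\phi\mapsto f$ is a linear functional of the unit vector $(\cos\phi,\sin\phi)$, so by the Cauchy--Schwarz inequality it is maximised exactly when $(\cos\phi,\sin\phi)$ is the positively oriented unit vector along $\big(A\cos t+C\sin t,\ B\cos t+D\sin t\big)$; in particular the optimal $\phi$ must satisfy
$$\tan\phi \;=\; \frac{B\cos t+D\sin t}{A\cos t+C\sin t} \;=\; \frac{B+D\tan t}{A+C\tan t},$$
the first asserted identity, and the value attained equals $\sqrt{g(t)}$ with $g(t):=(A\cos t+C\sin t)^2+(B\cos t+D\sin t)^2$. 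It remains to maximise the single--variable function $g$. Differentiating and using the double--angle identities, $g'(t)=0$ is equivalent to $(A^2+B^2-C^2-D^2)\sin t\cos t=(AC+BD)(\cos^2 t-\sin^2 t)$, i.e., after dividing by $\cos^2 t$, to $(AC+BD)\tan^2 t-(C^2+D^2-A^2-B^2)\tan t-(AC+BD)=0$. If $AC+BD\neq0$ this reads $\tan^2 t-2\alpha\tan t-1=0$ with $\alpha$ as in the statement, so $\tan t=\alpha\pm\sqrt{\alpha^2+1}$, which is case~(i). If $AC+BD=0$ the condition collapses to $(A^2+B^2-C^2-D^2)\sin t\cos t=0$, whose solutions in $[-\pi/2,\pi/2)$ are $t=0$ (and the endpoint $t=-\pi/2$) when $A^2+B^2-C^2-D^2\neq0$, which is case~(ii), and which is vacuous, leaving $t$ arbitrary, when $A^2+B^2-C^2-D^2=0$ as well.

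I do not expect a genuine obstacle: the problem is a two--variable maximisation whose stationarity conditions separate cleanly, first in $\phi$, then in $t$. The only step demanding care is the expansion in the second paragraph — keeping the complex Hermitian product and the real inner product $\langle\cdot,\cdot\rangle$ apart, and using the orthonormality of the columns of $P$ and $Q$ so that the components of $g_tQh_{\phi,\epsilon}$ stay unit vectors and the constant term $4$ appears — after which the identification of the coefficients with $A,B,C,D$ is purely mechanical. Since the displayed relations are only first--order necessary conditions, the actual optimal position is then singled out among the finitely many candidates (two signs $\epsilon$, at most two values of $\tan t$, one $\phi$ for each) by comparing the corresponding values of $d^2$.
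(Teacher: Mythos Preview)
Your argument is correct and follows essentially the same route as the paper: expand $d^2=4-2f$ with $f=A\cos\phi\cos t+B\sin\phi\cos t+C\cos\phi\sin t+D\sin\phi\sin t$, then locate the stationary points. The only cosmetic difference is that the paper writes down both first--order conditions $\partial_\phi f=0$ and $\partial_t f=0$ and substitutes one into the other, whereas you profile out $\phi$ via Cauchy--Schwarz and then differentiate the resulting one--variable function $g(t)$; both paths land on the identical quadratic $(AC+BD)\tan^2 t-(C^2+D^2-A^2-B^2)\tan t-(AC+BD)=0$.
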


	\begin{proof}
	\begin{eqnarray*}\lefteqn{
	d(P,g_tQh_{\phi,\epsilon})^2}\\
	&=& 4 - 2 \Bigg(\cos\phi \Big(\langle x, e^{it}y\rangle +\epsilon\langle v, e^{it}w\rangle\Big) + \sin\phi \Big(\langle x, e^{it}w\rangle -\epsilon\langle v, e^{it}y\rangle\Big)\Bigg)
	\end{eqnarray*}	
	gives
	\begin{eqnarray*}\lefteqn{
	\frac{4 - d(P,g_tQh_{\phi,\epsilon})^2}{2}}\\
	&=& A \cos \phi \cos t + B \sin\phi \cos t + C \cos \phi \sin t+ D\sin \phi \sin t\,.
	\end{eqnarray*}	
	For fixed $\phi$, a necessary condition for $t=t(\phi)$ to maximize the above r.h.s. is that
	$$ \tan t = \frac{C \cos \phi + D\sin \phi}{A \cos \phi +  B \sin\phi} =  \frac{C  + D\tan \phi}{A  +  B \tan\phi}\,.$$
	Similarly, a necessary condition for $\phi = \phi(t)$ is that
	$$ \tan\phi  = \frac{B \cos t + D\sin t}{A \cos t +  C \sin t} =  \frac{B  + D\tan t}{A  +  C \tan t}\,.$$
	Letting $\zeta= \tan t, \eta=\tan\phi$ we obtain
	\begin{eqnarray*}
	 \zeta &=&  \frac{C  + D \eta}{A  +  B \eta}~=~\frac{C(A+C\zeta)  + D(B+D\zeta)}{A(A+C\zeta)  +  B(B+D\zeta)}
	\end{eqnarray*}
	and, equivalently 
	\begin{eqnarray*}
	 (AC+BD)\zeta^2 + (A^2+B^2)\zeta &=& (C^2+D^2)\zeta + (AC+BD)\,,
	\end{eqnarray*}
	yielding the assertion.
	\end{proof}

	\begin{Th}\label{orth_proj:th}
	 Suppose that $P=(x,v) \in F_2^k\times F_2^k$, then $(\zeta,\eta) \in O_2^H(2,k-1)$ is the orthogonal projection of $P$ to $O_2^H(2,k-1)$ if and only if
	\begin{eqnarray*}
	 \zeta&=&\frac{1}{\langle x,\zeta\rangle}\Big(x - \langle x,\eta\rangle \eta - \langle x,i\eta\rangle i\eta\Big)\\
	 \eta&=&\frac{1}{\langle v,\eta\rangle}\Big(v - \langle v,\zeta\rangle \zeta - \langle v,i\zeta\rangle i\zeta\Big)
	\end{eqnarray*}
	$\zeta$ is arbitrary in case of $\langle x,\zeta\rangle=0$, and $\eta$ is arbitrary in case of $\langle v,\eta\rangle=0$. 
	\end{Th}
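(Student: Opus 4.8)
The key observation is that the orthogonal projection of $P=(x,v)$ onto $O_2^H(2,k-1)$ is by definition a minimizer of $(\zeta,\eta)\mapsto\|x-\zeta\|^2+\|v-\eta\|^2$ over that set, and since $\|\zeta\|=\|\eta\|=1$ there,
\[
 \|x-\zeta\|^2+\|v-\eta\|^2 \;=\; \|x\|^2+\|v\|^2+2-2\big(\langle x,\zeta\rangle+\langle v,\eta\rangle\big).
\]
So the plan is to characterise, by Lagrange multipliers, the maximizers of the \emph{linear} functional $F(\zeta,\eta):=\langle x,\zeta\rangle+\langle v,\eta\rangle$ on the compact submanifold $O_2^H(2,k-1)\subset F_2^k\times F_2^k$ (compactness being clear, since it is a closed subset of $S_2^k\times S_2^k$); a maximizer exists for that reason.

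First I would record the defining constraints $g_1=\|\zeta\|^2-1$, $g_2=\|\eta\|^2-1$, $g_3=\langle\zeta,\eta\rangle$, $g_4=\langle\zeta,i\eta\rangle$ together with their gradients, which, using that multiplication by $i$ is a real-orthogonal skew map (so $\langle a,ib\rangle=-\langle ia,b\rangle$ and $\|ia\|=\|a\|$), are $\nabla g_1=(2\zeta,0)$, $\nabla g_2=(0,2\eta)$, $\nabla g_3=(\eta,\zeta)$ and $\nabla g_4=(i\eta,-i\zeta)$; a one-line check shows these four are linearly independent at every point of $O_2^H(2,k-1)$, so the Lagrange conditions are available at every maximizer. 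Stationarity of $F$ then forces $x\in\operatorname{span}\{\zeta,\eta,i\eta\}$ and $v\in\operatorname{span}\{\eta,\zeta,i\zeta\}$; writing $x=a\zeta+b\eta+c\,i\eta$ and pairing this identity successively with $\zeta$, $\eta$ and $i\eta$ — inserting the orthonormality relations $\langle\zeta,\eta\rangle=\langle\zeta,i\eta\rangle=\langle\eta,i\eta\rangle=0$, $\|\zeta\|=\|\eta\|=\|i\eta\|=1$ valid on $O_2^H(2,k-1)$ — gives $a=\langle x,\zeta\rangle$, $b=\langle x,\eta\rangle$, $c=\langle x,i\eta\rangle$, i.e.\ $\langle x,\zeta\rangle\,\zeta=x-\langle x,\eta\rangle\eta-\langle x,i\eta\rangle i\eta$; this is the first asserted formula once one divides by $\langle x,\zeta\rangle$, and the second comes from the mirror-image computation applied to the $\eta$-component.

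It remains to dispose of the degenerate cases and the converse. If $\langle x,\zeta\rangle=0$ at a maximizer, the identity just derived forces $x=\langle x,\eta\rangle\eta+\langle x,i\eta\rangle i\eta$, i.e.\ $x$ lies in the complex line $\mathbb C\eta=\operatorname{span}\{\eta,i\eta\}$; but then $\langle x,\zeta'\rangle=0$ for \emph{every} admissible $\zeta'$ (one orthogonal to $\eta$ and $i\eta$), so $F$, hence the distance, does not depend on the choice of such $\zeta$ — exactly the assertion ``$\zeta$ arbitrary''; the case $\langle v,\eta\rangle=0$ is symmetric. For the converse I would run the pairing computation backwards: any $(\zeta,\eta)\in O_2^H(2,k-1)$ satisfying the displayed equations has $x\in\operatorname{span}\{\zeta,\eta,i\eta\}$, $v\in\operatorname{span}\{\eta,\zeta,i\zeta\}$ and is therefore a critical point of $F$, with the orthogonal projection picked out as the branch maximizing $F$, equivalently the one with $\langle x,\zeta\rangle,\langle v,\eta\rangle\ge 0$ (the equations yield $\langle x,\zeta\rangle^2=\|x-\langle x,\eta\rangle\eta-\langle x,i\eta\rangle i\eta\|^2$, and likewise for $\langle v,\eta\rangle$).

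The work is more bookkeeping than concept. The one point requiring care throughout is the interplay of the real inner product with the complex structure — the identities $\langle a,ib\rangle=-\langle ia,b\rangle$ and $\|ia\|=\|a\|$ are used repeatedly, both in forming the constraint gradients and in all the pairings — and the one genuinely borderline situation is the degenerate stratum where $x$ (resp.\ $v$) falls in the complex line $\mathbb C\eta$ (resp.\ $\mathbb C\zeta$): there the relevant part of $F$ is constant along an entire circle of admissible unit vectors, the displayed formula passes to the indeterminate $0/0$, and the nearest point is genuinely non-unique, which is what the ``arbitrary'' clauses encode.
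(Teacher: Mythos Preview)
Your proof is correct and follows essentially the same route as the paper: Lagrange multipliers for the squared distance under the four defining constraints of $O_2^H(2,k-1)$. The paper's own proof is a one-line instruction (``Apply Lagrange minimization to $\|x-\zeta\|^2+\|v-\eta\|^2$ under $\Phi(\zeta,\eta)=0$''); you simply carry this out in full, with the harmless preliminary simplification of replacing the quadratic objective by the linear functional $F(\zeta,\eta)=\langle x,\zeta\rangle+\langle v,\eta\rangle$, and you additionally spell out the regularity of the constraint gradients and the degenerate cases.
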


	\begin{proof}
	 Apply Lagrange minimization to $\|x-\zeta\|^2 + \|v-\eta\|^2$ for $\zeta,\eta \in F_2^k$ under the constraining condition $\Phi(\zeta,\eta)=0$ for 
	\begin{eqnarray*}
	\Phi(x,v) &=& \left(\begin{array}{c}1-\langle x,x\rangle\\ 1-\langle v,v\rangle\\ 2\langle x,v\rangle \\ 2\langle x,iv\rangle\end{array}\right)\,.
	\end{eqnarray*}
	\end{proof}

	\paragraph{Algorithm to obtain a pre-geodesic of a Ziezold mean geodesic} Let $P_1,\ldots,P_J$ be a sample of pre-geodesics. Starting with an initial value $(x^{(0)},v^{(0)}) = P^{(0)} := P_1$, say, obtain $P^{(n+1)}=(x^{(n+1)},v^{(n+1)})$ from $P^{(n)}=(x^{(n)},v^{(n)})$ for $n=0,1,\ldots $ by putting all $P_j \,(j=1,\ldots,J)$ in optimal position $P^*_j=(y^*_j,w^*_j)$ to $P^{(n)}$ by computing the corresponding $\phi_j,t_j,\epsilon_j$ from Theorem \ref{opt_pos:th}. Then, set 
	$$\left(x,v\right) :=\frac{1}{J}\,\left(\sum_{j=1}^Jy^*_j, \sum_{j=1}^Jw^*_j\right)$$
	and let $P^{(n+1)}$ be the orthogonal projection of $(x,v)$ to $O_2^H(2,k-1)$ from Theorem \ref{orth_proj:th}.

\section{Leaf Growth  Data and Problem Statement}\label{Forest:scn} 

	\begin{figure}[h!]
	 \includegraphics[angle=-90,width=0.45\textwidth]{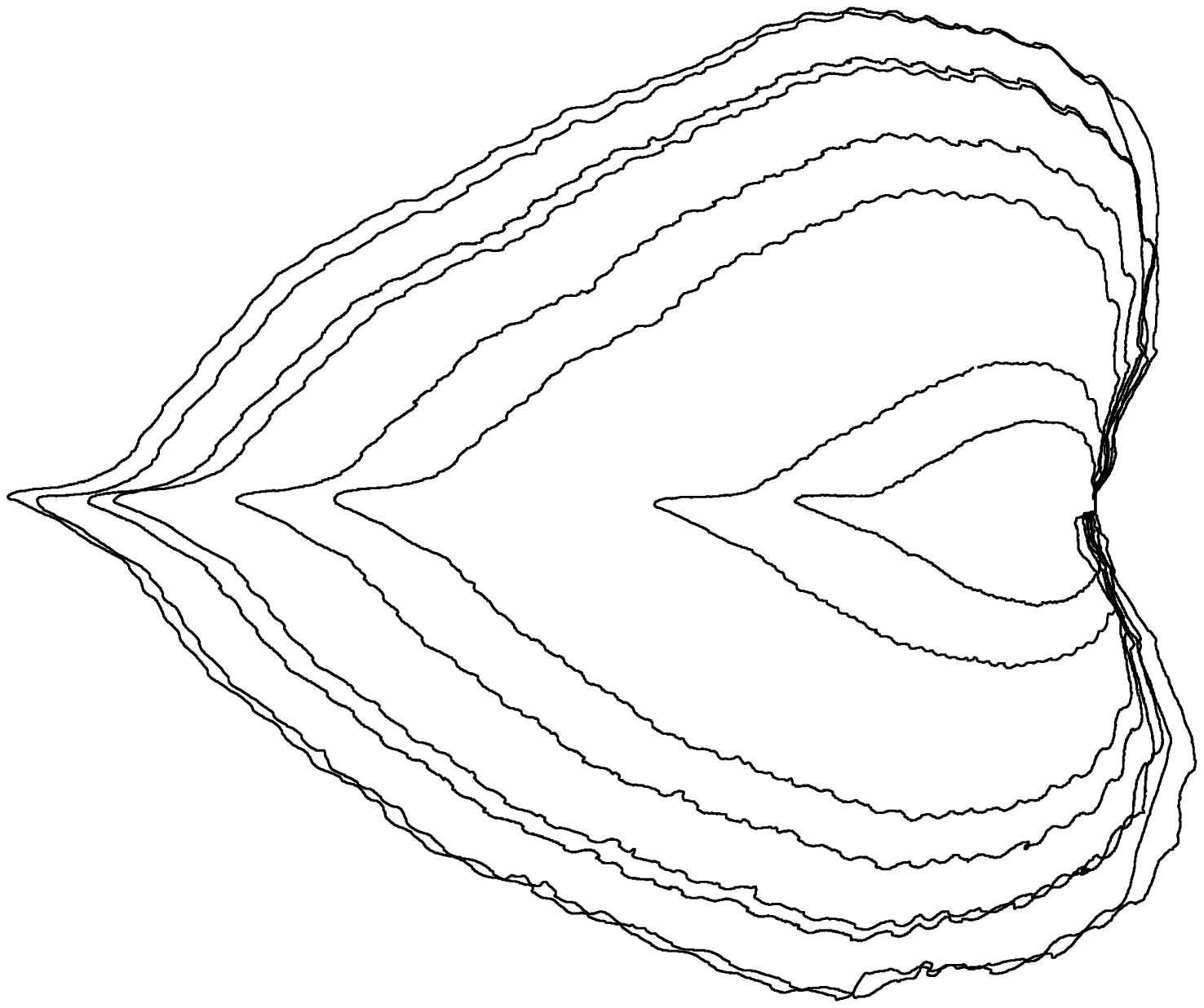}
	 \includegraphics[angle=-90,width=0.45\textwidth]{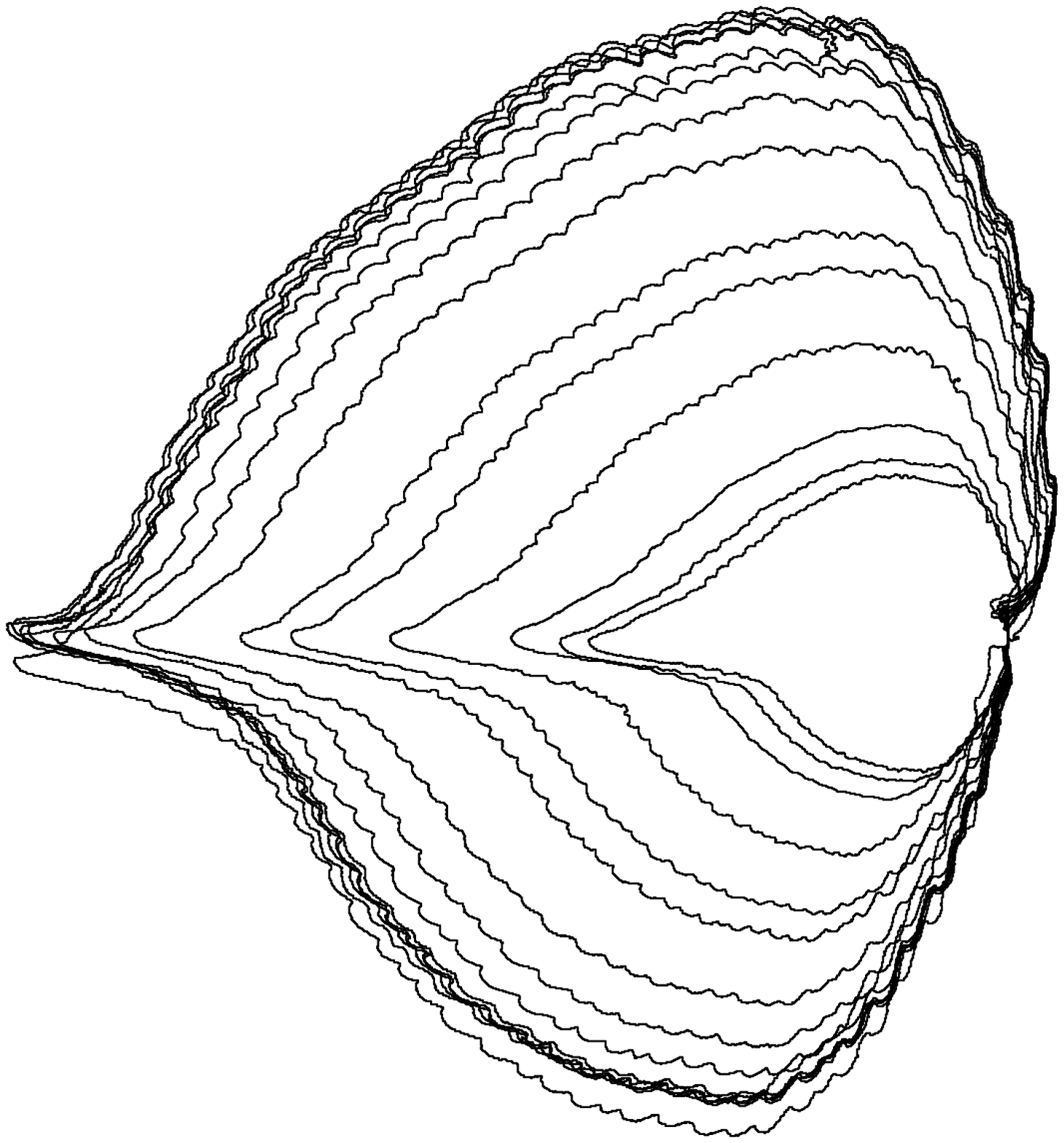}
	\begin{minipage}{0.45\textwidth}
	 \includegraphics[angle=-90,width=1\textwidth]{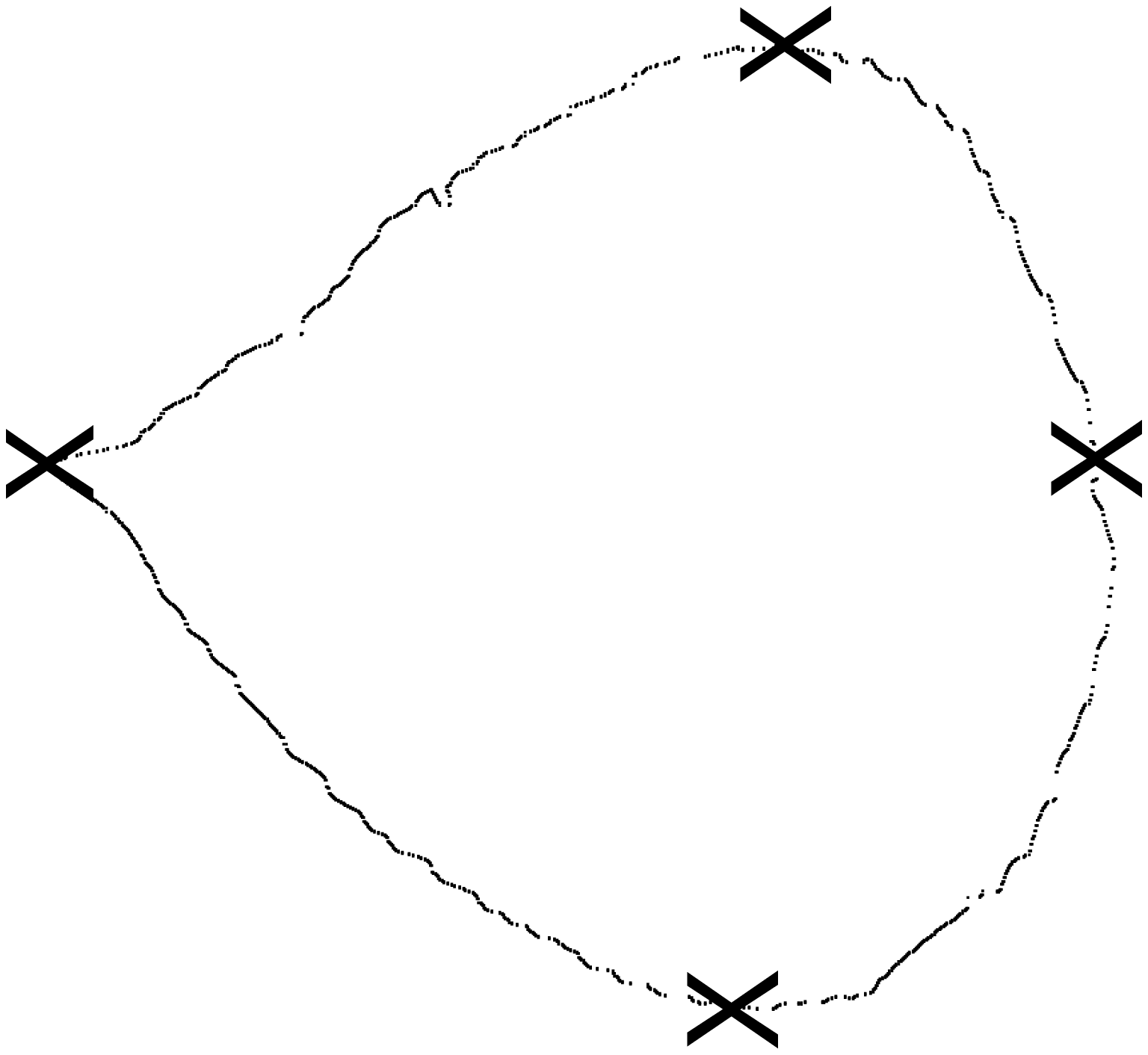}
	\end{minipage}
	\begin{minipage}{0.45\textwidth}
	\caption{\it Top row: typical leaf growth over a growing period of a reference tree (left) and one of the two clones (right). Bottom left: typical digitized leaf contour and landmarks of the corresponding quadrangular configuration at petiole, tip, and largest extensions orthogonal to the connecting line.\label{quadrangular-shape:fig}}
	\end{minipage}
	\end{figure}

	\paragraph{From leaf data to shape descriptors}  We consider leaf shape data collected from two clones and a reference tree of black Canadian poplars at an experimental site at the University of G\"ottingen. These data are similar but different from the data reported on in \cite{HHM09} and \cite{H_dynamics_10}. They  consist of the shapes of 21 leaves from clone 1 and of 11 leaves from clone 2 as well as of the shapes of 12 leaves from the reference tree, all of which have been recorded non-destructively over several days during a major portion of their growing period of approximately one month (the maximal number of observations is 17, the minimal 2 with a median of 13). The top row in Figure \ref{quadrangular-shape:fig} shows typical contours of  leaf growth. At each time point, from each leaf contour a quadrangular \emph{landmark based configuration} has been extracted by placing one landmark at the petiole (where the stalk enters the leaf blade), one at the leaf tip (the endpoint of the main leaf vein) and two, each at the maximal extensions orthogonal to the line connecting petiole and tip, cf. the bottom image of Figure \ref{quadrangular-shape:fig}. These four landmarks encode in particular the information of length, width and vertical and horizontal assymetry.  As detailed in Section \ref{SS:scn}, these landmarks additionally convey a correspondence of leaves with \emph{shapes}, i.e. points in the \emph{shape space} $\Sigma_2^4$. This space is a non-Euclidean manifold and a special case of \emph{Kendall's landmark based shape spaces}.

	\paragraph{The geodesic and parallel hypotheses for biological growth}
	Investigating landmark based configurations of rat skulls, \cite{LK00} observed that:
	\begin{center}{\it 
	the shape change due to biological growth mainly\\ follows a geodesic in Kendall's shape space.}
	\end{center}
	  In a research modeling the growth of tree-stem disks as well as leaf growth this \emph{geodesic hypothesis} has been corroborated by \cite{HHGMS07}.
	\cite{JK87, KMMA01, EDL, KDL07} have proposed more subtle models for shape growth essentially building on polynomials in Procrustes residuals (cf. Section \ref{Tests:scn} below). 

	Additionally, \cite{MKMA00} observed parallel growth patterns and coined the \emph{parallel hypothesis}, stating that Procrustes residuals of related biological objects follow curves parallel in the Euclidean geometry of the tangent space at a Procrustes mean. In view of the geodesic hypothesis we restrict those curves 
	to straight lines, generally however, not mapping to geodesics (cf. Figure \ref{two-arrows:fig}). 

	\begin{figure}[h!]
	\begin{minipage}{0.57\textwidth}
	 \includegraphics[width=1\textwidth]{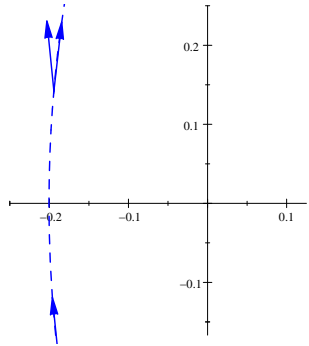} 
	\end{minipage}
	\begin{minipage}{0.01\textwidth}\hfill \end{minipage}
	\begin{minipage}{0.4\textwidth}
	\caption{\it 
	Tangent space  of the two-dimensional $\Sigma_2^3$ (obtained from $\Sigma_2^4$ by leaving out one landmark) under the inverse Riemann exponential at the intrinsic mean corresponding to the extent of the overall data (clones 1 + 2 and reference tree). 
	The left top vector is the affine parallel transport of the bottom vector in the Euclidean tangent space, the right top vector is its intrinsic parallel transplant along a geodesic (dashed). 
	\label{two-arrows:fig}}
	\end{minipage}
	\end{figure}

	\paragraph{A brief discussion of the geodesic hypothesis} 
	In D'Arcy Thompson's seminal work \cite{T17}, biological form and growth of form has been explained by the invocation of the mathematical concpet of force. More recently, the relationship between growth and energy minimization has been explored by \cite{B78}. These works have led \cite{LK00} 
	to the above hypothesis, 
	being aware that, firstly, 
	geodesics depend on a specific geometry 
	of a shape space, and secondly, even though many paths of growth seem to follow geodesics, there are examples where the geodesic fit is rather poor (e.g. \cite{EDL}). E.g. for the space of planar triangles, the hyperbolic geometry of the complex upper half plane introduced by \cite{B86} seems just as natural as the spherical geometry of the complex projective space in one complex dimension (Kendall's shape space for planar triangles, cf. \cite{K84}). 
	However, considering geodesics in Kendall's planar shape space as a rough working hypothesis, in particular for the leaf shapes in question, seems like a promising starting point for statistical investigation in the same way that approximate linearity has served statisticians well since the time of Gauss (or even earlier).

	\paragraph{Problem statement} As visible in Figure \ref{quadrangular-shape:fig}, the shape of leaves of the clones can usually be well discriminated from the shape of leaves of the reference tree by visual inspection. 	Following the geodesic hypothesis, the shape change under growth could be predicted from initial observations, ideally two initial observations would suffice. Since for the data at hand, the evolution of leaf contours have been followed elaborately along several time points,  then the effort for future research could be cut down considerably. This leads to the following fundamental problem. 
	\begin{center}{\it 
	Can leaf growth of genetically identical trees be predicted\\ and discriminated from growth of genetically different trees\\ on the basis of few initial measurements? }
	\end{center}
	In this research we restrict ourselves to measuring shape by four landmarks as detailed above.

\section{Tests for Shape Dynamics}\label{Tests:scn}
		
	The precise definitions used in this section can be found in Sections \ref{SS:scn} and \ref{Asymp:scn}. 
	For every leaf considered a \emph{first geodesic principal component} (GPC -- a generalization  to manifolds of a first principal component direction) is computed either from its first two shapes (then the GPC is just the geodesic connecting the two, the correspondig group is called ``young'') or from the rest of the shapes (that goup is called ``old'') over the growing period (for an algorithm, see \cite{HT06}). For two groups (young vs. young, young vs. old and old vs. old of different trees) to be tested for a common mean first GPC, the procedure detailed below produces data in a Euclidean space, namely in the tangent space of the space of geodesics at a mean geodesic. For comparison, a classical test for equality of mean shape as well as a test for equality of mean direction based on classical methodology described below, similarly produce data in a Euclidean space, namely the tangent space of the shape space at a mean shape. For all three procedures, within the respective Euclidean space, the corresponding tests then test for a common mean via the classical Hotelling $T^2$-test. Note that no two groups from the same tree are tested because of statistical dependence.

	       \begin{figure}[h!]
	\centering
        	\includegraphics[width=0.7\textwidth, angle=-90]{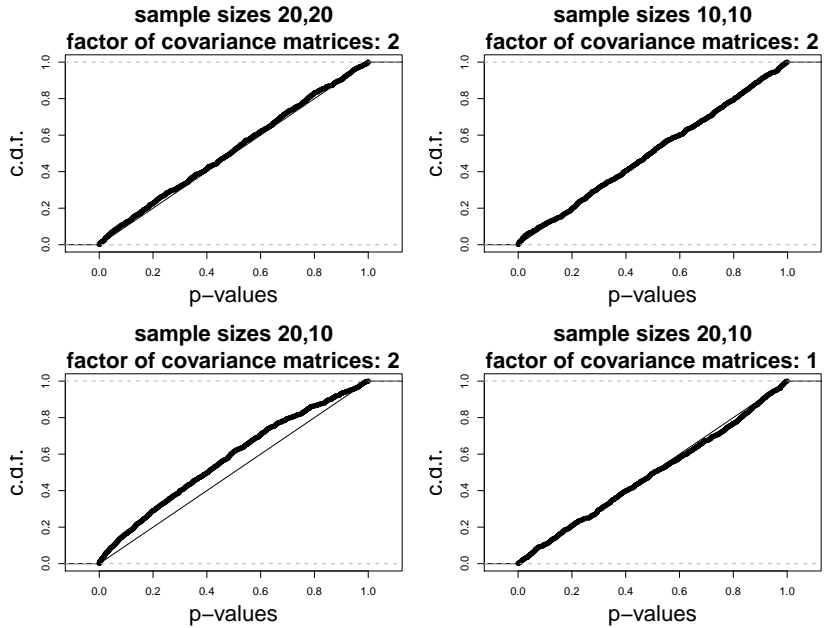}
	\caption{\it Hotelling's $T^2$-test under nonnormality. The cumulative distributions of the empirical test statistics have been generated with $1,000$ repetitions of two groups of sizes as displayed in the respective headers, with three-dimensional deviates uniform in a 3D interval centered at the origin, whose covariance matrices (which are determined by the dimensions and rotation of the interval) are constant multiples of one another up to a random rotation (fixed for every display). The respective headers give the corresponding constant factors. \label{simulation-nonnormal:fig}}
       \end{figure}

	\paragraph{Robustness under nonnormality} Recall that the \emph{Hotelling} $T^2$-distribution is a generalization of a \emph{Student} $T$-distribution. As in the univariate case, the corresponding test statistic follows this distribution, if the coordinate data follow a multivariate normal distribution. Since the shape spaces considered are compact, obviously, we may never assume this central hypothesis. It is well known, however, that the corresponding statistic is robust to some extent under nonnormality, one condition being finite higher order moments. Clearly, this condition is met on a compact space. Even more, it is known that asymptotically the distribution of the corresponding statistic is unchanged under unequal change of covariances  if the ratio of sample sizes tends to 1, e.g. \citet[p. 462]{L97}. The simulation in Figure \ref{simulation-nonnormal:fig} illustrates robustness for fairly small sample sizes: under nonnormality and unequal sample sizes (bottom right display), and under nonnormality and unequal covariances with equal sample sizes (top row). The test is not robust under non-normality, however, if sample sizes and covariances differ considerably (bottom left display). 

	\paragraph{The test for common geodesics}
	Every first GPC computed above	deter\-mines a unique element in the space of geodesics $\Gamma(\Sigma_2^4)$ of $\Sigma_2^4$. Using the embedding of the space of pre-geodesics $O_2^H(2,3)$ into $\mathbb C^3\times \mathbb C^3$ as detailed in Section \ref{Asymp:scn}, these elements are orthogonally projected to the tangent space of the two group's Ziezold mean (an element of the space of geodesics $\Gamma(\Sigma_2^4)$) thus giving data in a Euclidean space. The corresponding null hypothesis is then
	\begin{center}{\it the temporal evolution of shape for every group follows a common geodesic.}\end{center}
	In other words, if $\gamma_i$ are the first GPCs 
	of leaf growth of groups $i=1,2$ to be specified later, then the null hypothesis states $H_0: \gamma_1=\gamma_2$. This is a hypothesis on the mean geodesic of geodesics which can be tested by use of Theorem \ref{Zie_mean:thm}.

	\paragraph{Tests for common means} Following the classical scheme (e.g. \citet[Chapter 7]{DM98}), all shapes of the two groups considered are projected to the tangent space of their overall Procrustes mean giving Procrustes residuals with the null hypothesis,
	\begin{center}{\it the Procrustes tangent space coordinates of the temporal evolution\\ of shape for every leaf have the same Euclidean mean.}\end{center}
	\paragraph{A theoretical note on related tests}
	Taking instead the Procrustes residuals at the common Procrustes mean of the Procrustes means of every temporal evolution of shape would give a different test.	If all leaves considered have equally many individual shapes then this second test is very closely approximated by the first test. Otherwise, choosing suitable weights will give a close approximation. The goodness of the approximation can be numerically confirmed, it also follows from the fact that only one effect is tested, cf. \citet[pp.2-4]{HHM07}. Similar tests for static shape utilize intrinsic or Ziezold means, respectively. In fact, for hypotheses on non-degenerate three-dimensional shapes one could only test hypotheses building on intrinsic or Ziezold means (because Procrustes means may lie outside the manifold part such that the CLT is not applicable, cf. \cite{H_meansmeans_10}).

	\paragraph{Tests for common directions} Returning to the classical scheme, following \cite{MKMA00}, compute the Euclidean first principal component of each set of Procrustes residuals corresponding to the shapes of a single leaf's evolution pointing into the direction of growth. For the analysis of the unit length directions we proceed as described above. Within the Procrustes paradigm, the residual tangent space coordinates of these directions at their common residual mean closer to the data are projected  orthogonally to the Euclidean space of suitable dimension. The null hypothesis is then,
	\begin{center}{\it the Procrustes tangent space coordinates of the temporal evolution\\ of shape for every leaf share the same first Euclidean PC.} \end{center}
	This is the version of the \emph{parallel hypothesis} for this paper. If the static mean shapes of the two groups considered are different, then the common Procrustes mean depends on the ratio of the two sample sizes. Moreover, even for common static shape, the null hypothesis incorporates effects of curvature, cf. Figure \ref{two-arrows:fig}.

\section{Discriminating Canadian Black Poplars by Partial Observation of Leaf Growth}\label{App:scn}

	\begin{table}[b!]\centering
	{\footnotesize 
	\fbox{
	\begin{tabular}{cc|ccc} 
	dataset 1&dataset 2&geodesics&directions&means\\\hline
	  clone 1 young (21)& clone 2 old (11)&$0.75$&$1.6e-04$&$4.7e-08$\\
	  clone 1 young (21)& clone 2 young (11)&$0.97$&$0.82$&$0.71$\\
	  clone 2 young (11)&clone 1 old (20)&$0.066$&$0.0021$&$5.5e-07$\\
	  clone 1 old   (20)& clone 2 old (11)&$0.17$&$0.21$&$0.71$\\

\hline	\multicolumn{2}{c|}{correct classification of clones:}\\
\multicolumn{2}{c|}{at $95\,\%$-level}&$100.00\,\%$&$50.00\,\%$&$50.00\,\%$\\
	  \multicolumn{2}{c|}{at $99\,\%$-level}&$100.00\,\%$&$50.00\,\%$&$50.00\,\%$\\ \hline
\hline
	  clone 1 young (21)& reference young (12)&$0.0012$&$0.65$&$6.5e-06$\\
	  clone 2 young (11)& reference young (12)&$0.0043$&$0.79$&$0.0015$\\

	  clone 1 young (21)& reference old (9)&$0.0067$&$0.0077$&$1.9e-07$\\
	  clone 2 young (11)& reference old (9)&$0.026$&$0.023$&$1.0e-05$\\

	  clone 1 old (21)& reference young (12)&$0.00022$&$0.0013$&$2.4e-06 $\\
	  clone 2 old (11)& reference young (12)&$0.0092$&$0.014$&$0.0046$\\

	  clone 1 old (21)& reference old (9)&$0.087 $&$0.023 $ &$0.0014$\\
	  clone 2 old (11)& reference old (9)&$0.021 $&$0.018 $ &$0.0068$\\

\hline	\multicolumn{2}{c|}{correct classification of the reference tree}\\
  \multicolumn{2}{c|}{at $95\,\%$-level}&$87.50\,\%$&$75.00\,\%$&$100.00\,\%$\\
	  \multicolumn{2}{c|}{at $99\,\%$-level}&$62.50\,\%$&$25.00\,\%$&$100.00\,\%$\\
\hline\hline	\multicolumn{2}{c|}{correct classification:}\\ \multicolumn{2}{c|}{ clones vs. reference tree}\\
\hline	  \multicolumn{2}{c|}{at $95\,\%$-level}&$93.75\,\%$&$62.50\,\%$&$75.00\,\%$\\
	  \multicolumn{2}{c|}{at $99\,\%$-level}&$81.25\,\%$&$37.50\,\%$&$75.00\,\%$
	 \end{tabular}}}
	\caption{\it Displaying $p$-values for several tests for the discrimination of clones from the reference tree via leaf growth (``young'' denotes the dataset comprising the first initial two  observations and ``old'' the dataset comprising the rest of the observations). For convenience, the sample size (number of different leaves followed over their growing period) of the corresponding data set is reported in parentheses.\label{initial-vs-rest:tab}}
	\end{table}

	For the following tests, as the first groups called ``young'' in the following, the first two initial shapes from every leaf considered have been taken and the unique geodesic joining the two has been computed. For the second groups, called ``old'' in the following, for every leaf considered the first GPC of the rest of the shapes has been computed, if the  number of the rest of shapes exceeded 3. For this reason, the number of GPCs in some of these groups is possibly smaller then the number in corresponding groups of ``young''. To the respective groups the three tests introduced in Section \ref{Tests:scn} have been applied. The results are reported in Table \ref{initial-vs-rest:tab} in different order, however, since the tests for common geodesics and for common directions test for closely related concepts. 

	As visible in Table \ref{initial-vs-rest:tab}, the test for common geodesics (first data column) allows to discriminate  very well genetically different trees from genetically identical trees by the observation of growth over restricted time intervals only. Due to curvature (cf. Figure \ref{two-arrows:fig}) in the first box for genetically identical trees, whenever over these restricted time intervals the means differ (ultimate data column, i.e. when the growth of young leaves is compared to the growth of old leaves), then the directions also differ highly significantly (middle data column). For genetically different trees (third box) all of the group means differ highly significantly (ultimate data column) and most of the directions as well. Note that genetically different young leaves cannot be discriminated by their directions. This can be explained by comparison with the bottom right display of Figure \ref{rot-schwarz-ref-young-old-growth2008:fig}: leaf shapes of young leaves tend to be comparatively close to each other.

	\begin{figure}[h!]
	\centering
	 \includegraphics[angle=-90,width=0.5\textwidth]{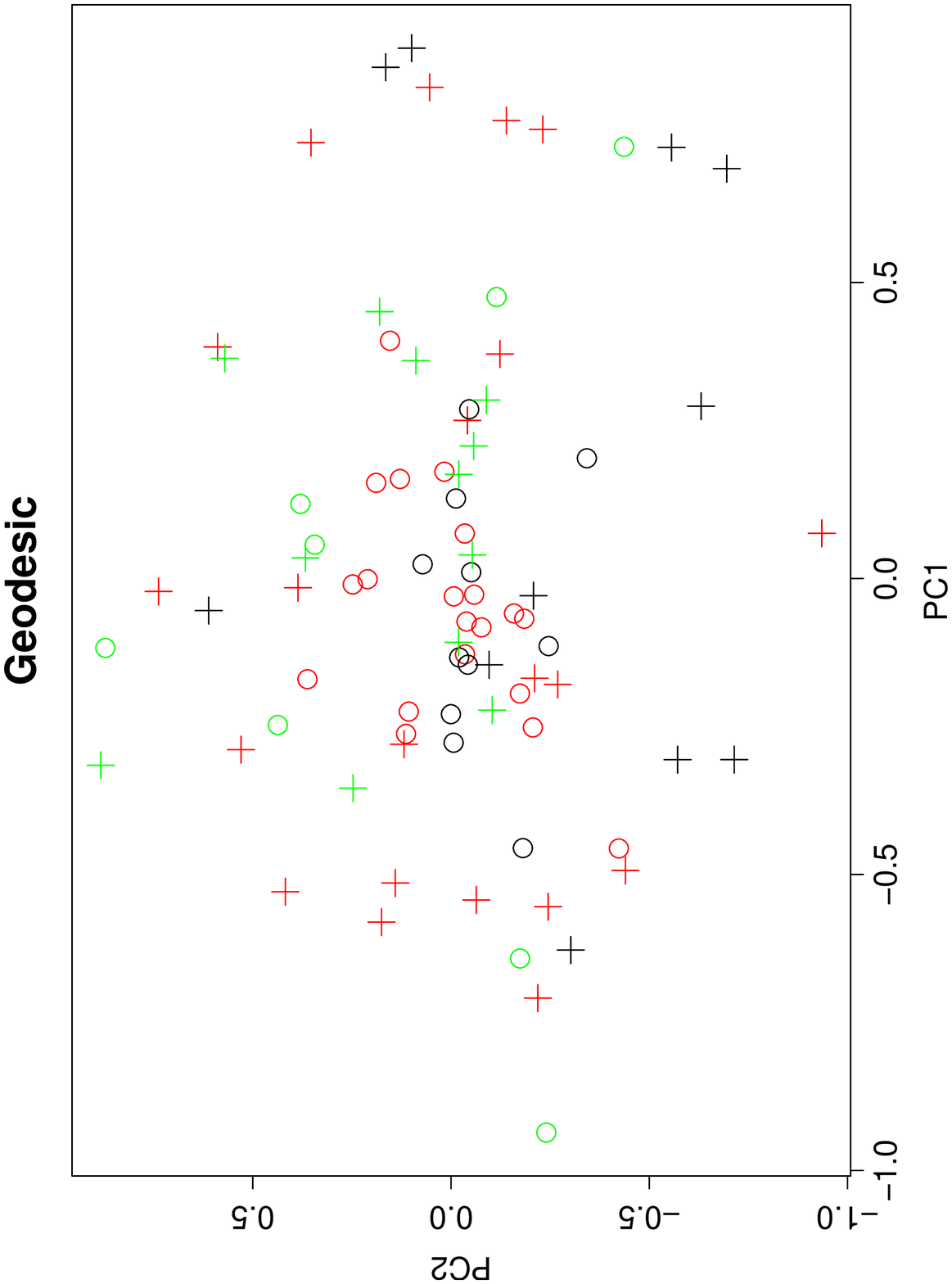}\\
	 \includegraphics[angle=-90,width=0.48\textwidth]{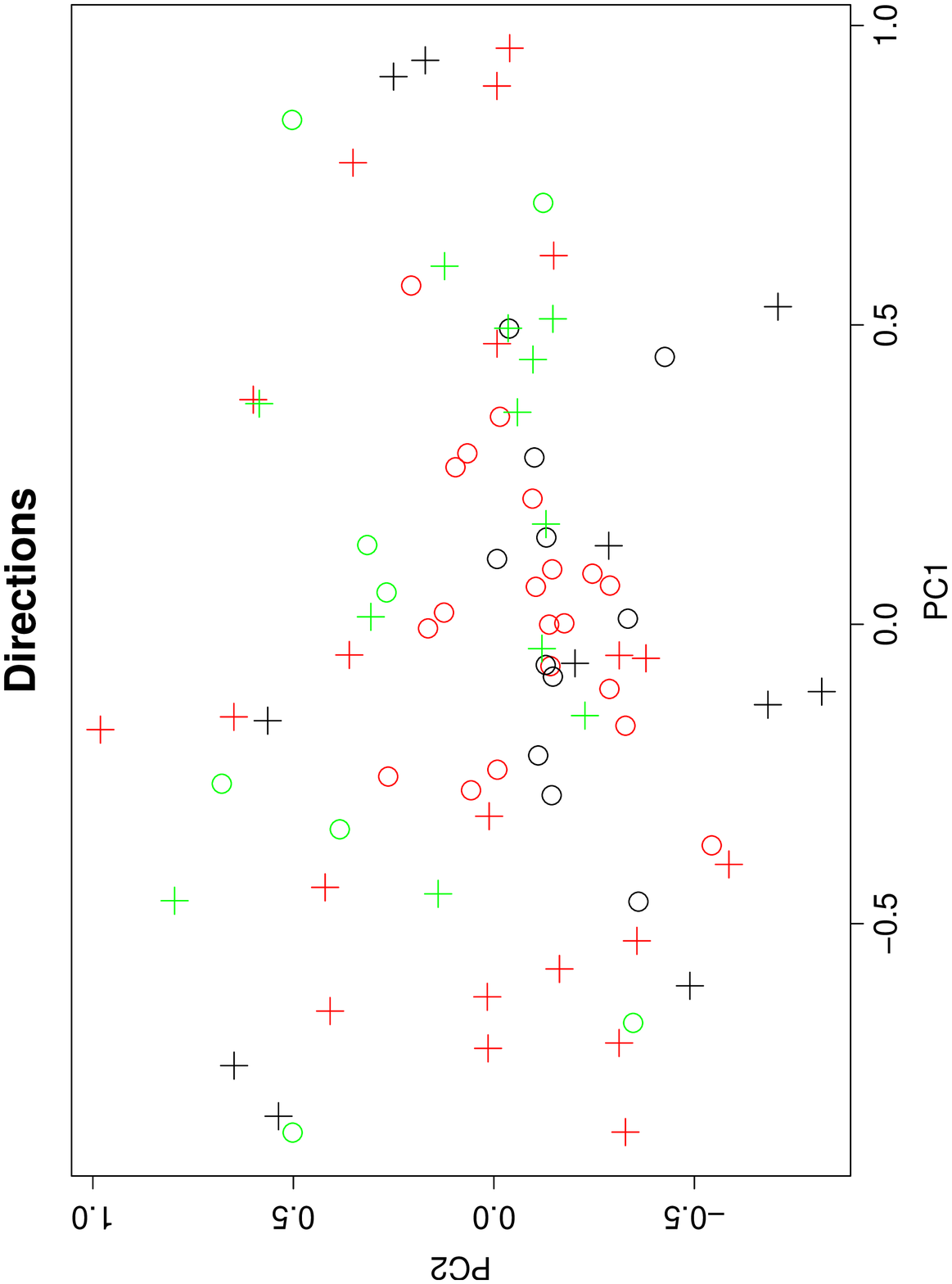}
	 \includegraphics[angle=-90,width=0.48\textwidth]{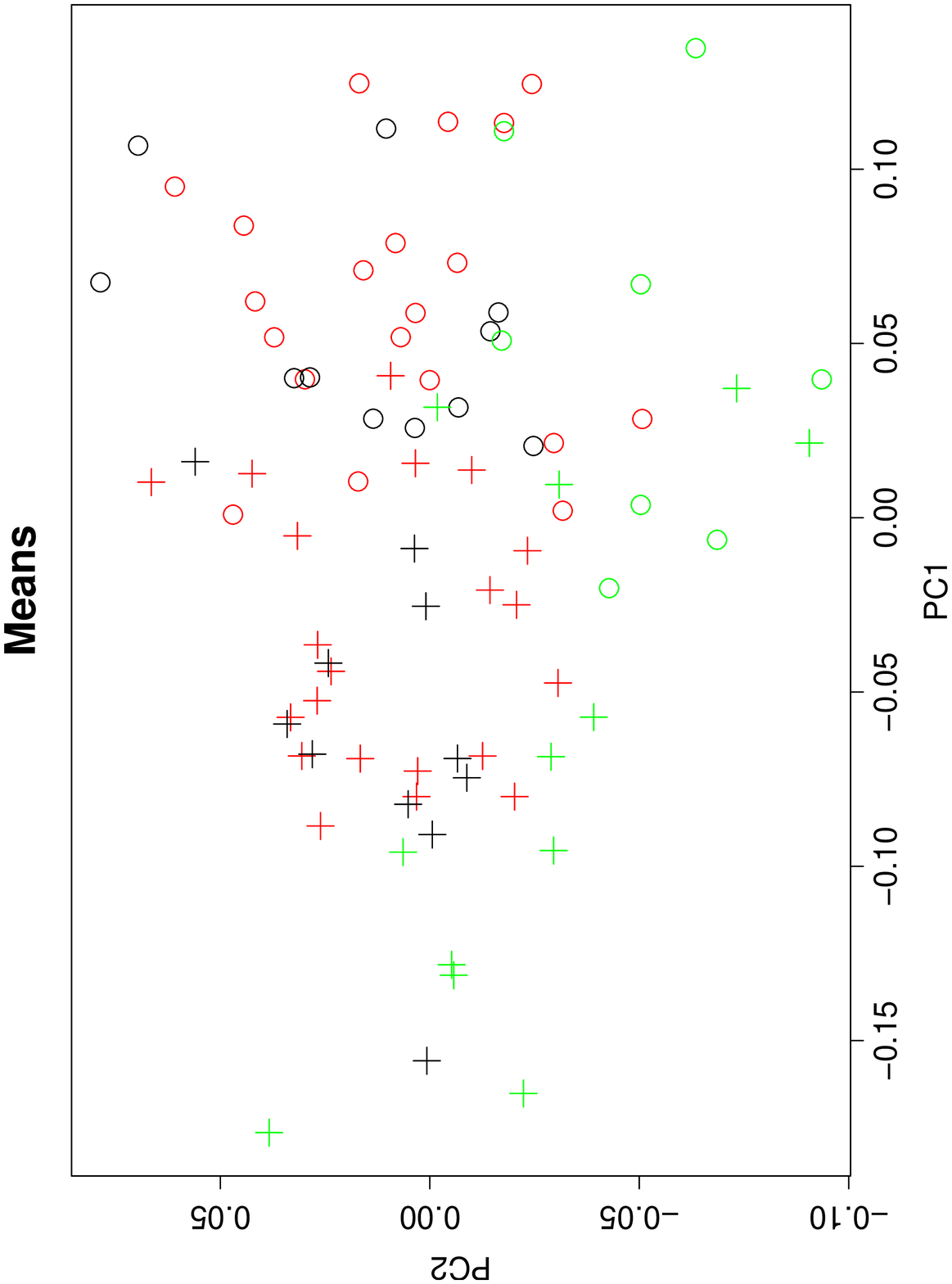}
	\caption{\it Projection of leaf shape growth from two clones (red for clone 1 and black for clone 2) and a reference tree (green) to the dominating coordinates of the corresponding tangent space at the overall mean as detailed in Section \ref{Tests:scn}. Each cross represents a single leaf's initial shape evolution over two observations (young), each circle represents the rest of the leaf's shape evolution during its observed growing period (old). Top: 
	GPCs projected to the tangent space at the Ziezold mean geodesic. 
	Bottom row: all shapes have been projected to the tangent space of the overall Procrustes mean. Bottom left: unit directions of first Euclidean PCs. Bottom right: Euclidean means.
	\label{rot-schwarz-ref-young-old-growth2008:fig}}
	\end{figure}
	 
	Figure \ref{rot-schwarz-ref-young-old-growth2008:fig} depicts the first two dominating coordinates (explaining between $80\,\%$ and $90\,\%$ of the total variation) of a tangent space projection of the overall dataset (young and old) for clones and the reference tree. In contrast to the test for common geodesics (cf. top display), almost all of the different groups (young vs. old of clone 1, clone 2, and the reference tree) can be discriminated by the directional data (bottom left display, e.g. the red and black crosses tend to lie on the l.h.s., the red and black circles on the r.h.s.). 
	In all of the first two displays (top and bottom left), for the clones, the variance of the ``young'' data appears slightly larger than the variance of the ``old'' data. For the reference tree, however, this is not  the case. This effect is not visible when observing means only (bottom right display). As discussed in Section \ref{Tests:scn}, unequal covariances may be troublesome w.r.t. to the validity of the Hotelling $T^2$-test employed if sample sizes are not approximately equal. Considering in Table \ref{initial-vs-rest:tab} only the samples of similar sizes $9,11$ and $12$, however, comparable classification results are obtained.

	\paragraph{Conclusion} From this study we conclude the following.
	\begin{enumerate}
	 \item[(a)] Clone and reference tree can be discriminated by partial observations of leaf shape growth not necessarily covering the same interval of the growing period via the test for common geodesics. This is not possible via a test for common means (due to temporal change of shape) or common directions (due to curvature).
	\item[(b)] The ``geodesic hypothesis'' has been validated by use of	 the test for common geodesics, notably it could not have been validated using the test for common directions.
	\item[(c)] For a statistical prediction of future leaf shape growth, two initial observations suffice.
	\end{enumerate}

	\paragraph{Application} Let us elaborate on one consequence of conclusion (a). Suppose that we have several leaf shape growth data of clones and the reference tree of short but arbitrary time intervals (young, old, intermediate, etc.). Then most likely the test for common means will not be able to identify the clone from the reference tree, because the mean shapes of the different time intervals will most likely  be different even for the same leaf considered. Similarly, due to curvature the test for common directions will fail, unless all data are jointly highly concentrated. It is only the test for common geodesics that may furnish the desired discrimination. 

\section{Discussion}\label{Disc:scn}

	The \emph{geodesic hypothesis} of \cite{LK00} -- its scope and limitations have been discussed in Section \ref{Forest:scn} -- 
	has been corroborated in many scenarios of biological growth. In this paper a  \emph{test for common geodesics} for this hypothesis has been devised and successfully applied to the problem of discriminating poplar leaf growth based on two initial observations only. For computational feasibility, not the concept of an intrinsic mean geodesic but rather that of a Ziezold mean geodesic has been employed. One can thus call the test devised a \emph{semi-intrinsic test}. The semi-intrinsic test for comman geodesics has been compared to a test for common directions building on directions in the space of Procrustes residuals (following \cite{MKMA00}). This is essentially a \emph{non-intrinsic test} because it linearizes the shape space and not the space of shape descriptors tested for. It turned out that for the discrimination task at hand, curvature present rendered this test ineffective. 
	The author is not aware of any other test for the geodesic hypothesis in the literature.

	

	
	In this work we have considered two types of mean first GPCs, one defined by a sample of GPCs with underlying samples of random shapes, which -- like growth patterns -- are obviously dependent. For independent sampling the other mean first GPC has been defined directly by the shape data. Since $\rho$ (defining the latter) is different from $\delta$ (defining the former, cf. Section \ref{SS:scn}) -- as a manifestation of `inconsistency' (cf. \cite{KM97}; \cite{H_Procrustes_10}) -- the limit of mean random geodesic of geodesics and the population geodesic of shapes may be different as well. Studying their relationship, however, may provide further insight. 



	In conclusion let us ponder on extensions and generalizations of this research. One may view all shape descriptors as generalized Fr\'echet means on suitable spaces. For geodesics on Kendall's planar shape spaces, we have provided an explicit framework using a Ziezold mean geodesics which can be computed fairly easy. Straightforward but considerably more complicated is the use of intrinsic mean geodesics. At this point we note that the space of generalized geodesics on Kendall' shape space for dimension $m\geq 3$ ceases to be a manifold. Like a shape space, it can be viewed as the quotient of a Riemannian manifold modulo a Lie group action. In contrast to Kendall's shape spaces, the top space itself (a submanifold of a Grassmannian) admits two canonical geometries (cf. \cite{EAS98}). For the embedding underlying the Ziezold mean geodesic, we have used the simpler of the two. Possibly, this framework extends to other one-dimensional shape descriptors, such as arbitrary circles on spheres (cf. \cite{JFM_sinica_2010}), or more generally, the family of constant curvature curves, as well as to higher dimensional shape descriptors (for geodesic descriptors cf. \cite{HHM07}, for non-geodesic descriptors cf.  \cite{JFM2011}). 
	Thus, (semi)-intrinsic inference on any of such descriptors may be possible. 

	One final word of caution: even for fairly simple spaces such as the torus or the surface of an infinite cylinder, the canonical topology of the space of geodesics is non-Hausdorff (cf. \cite{BeemParker91}) and thus may not admit any meaningful statistical descriptors.

\section*{Acknowledgments}
	The author would like to thank his colleagues from the Institute for Forest Biometry and Informatics at the University of G\"ottingen whose interest in leaf growth modeling 
	prompted this research. In particular he would like to thank Michael Henke for supplying with the leaf data. Moreover, he is indebted to Thomas Hotz for discussing statistical issues and to David Glickenstein for a comment on geometric aspects. Also, 
	the author is indebted to the anonymous referees 
	for their very valuable comments, some of which have been literally adopted.

\appendix

\section{Strong Consistency}\label{consist:app}	
	
	For this section suppose that $X, X_1,X_2,\ldots$ are i.i.d. random elements mapping from an abstract probability space $(\Omega,\cal A,{\cal P})$ to a topological space $Q$ equipped with its Borel $\sigma$-field; $(P,d)$ denotes a topological space with distance $d$. 

	\begin{Def}\label{Frechet_means:def} For a continuous function $\rho:Q\times P \to [0,\infty)$  define the \emph{set of  population Fr\'echet $\rho$-means of $X$ in $P$} by
	$$ E^{(\rho)}(X) = \argmin_{\mu\in P} \mathbb E\big(\rho(X,\mu)^2\big) 
	\,.$$
	For $\omega\in \Omega$ denote by
	$$ E^{(\rho)}_n(\omega) = \argmin_{\mu\in P} \sum_{j=1}^n \rho\big(X_j(\omega),\mu\big)^2$$
	the \emph{set of sample Fr\'echet $\rho$-means}.
	\end{Def}

	By continuity of $\rho$, the mean sets are closed random sets. For our purpose here, we rely on the definition of \emph{random closed sets} as introduced and studied by \cite{Choq54}, \cite{Kend74} and \cite{Math75}. Since their original definition for $P=Q, \rho =d$ a metric by \cite{F48} such means have found much interest.

	We will work with the following two definitions of \emph{strong consistency}, each has been coined as such for metrical Fr\'echet means by the respective authors.

	\begin{Def}\label{SLLN:def} Let $E^{(\rho)}_n(\omega)$ be a random closed set and $E^{(\rho)}$ a deterministic closed set in a space with distance, $(P,d)$. We then say that 
	\begin{enumerate}
	 \item[(ZC)] $E^{(\rho)}_n(\omega)$ is a \emph{strongly consistent estimator in the sense of \citet{Z77}} of $E^{(\rho)}$ if  for almost all $\omega\in \Omega$
	$$\bigcap_{n=1}^\infty \overline{\bigcup_{k=n}^\infty E^{(\rho)}_k(\omega)} \subset E^{(\rho)}$$
	 \item[(BPC)] $E^{(\rho)}_n(\omega)$ is a \emph{strongly consistent estimator in the sense of  \citet{BP03}} of $E^{(\rho)}$ if  $E^{(\rho)}\neq \emptyset$ and if for every $\epsilon >0$ and almost all $\omega\in \Omega$ there is a number $n=n(\epsilon,\omega)>0$ such that
	$$\bigcup_{k=n}^\infty E^{(\rho)}_k(\omega) \subset \{p\in P: d(E^{(\rho)},p)\leq \epsilon\}\,.$$
	\end{enumerate}
	\end{Def}
 
 	For quasi-metrical means on \emph{separable} (i.e. containing a dense countable subset) quasi-metrical 
 	spaces, \cite{Z77} proved (ZC).  
 	For metrical means on spaces that enjoy the stronger \emph{Heine-Borel property} (i.e. that every bounded closed set is compact), \cite{BP03} proved (BPC) . 
	One may argue from a statistical point of view that for ``consistency'' one would want to have equality of points, and if not possible, at least an equality of sets (the set of 3D Procrustes means always contains at least two antipodal points, e.g. \cite{H_Procrustes_10}), rather than an inclusion only. Even though it would be interesting to construct an example with strict inclusion, it seems that this case has no relevance in applications.

	In order to generalize Ziezold's and the Bhattacharya-Patrangenaru Strong Consistency Theorem we introduce two properties. A \emph{continuity} property in the second argument \emph{uniform} over the first argument -- a consequence of the triangle inequality if $\rho$ is a quasi-metric -- and a version of \emph{coercivity} in the second argument -- again valid if $\rho$ is a quasi-metric:
	\begin{eqnarray}
	 \label{unif_cont}
	&&\left.\begin{array}{l}\mbox{for every $x\in Q,p\in P$ and $\epsilon>0$ there is a $\delta=\delta(\epsilon,p)>0$ }\\ 
	\mbox{such that $|\rho(x,p') -\rho(x,p)| <\epsilon$ for all $p'\in P$ with $d(p,p')<\delta$}\\ 
	 \end{array}\right\} \\
	\label{mild_coerc}
	&&\left.\begin{array}{l}\mbox{there are $p_0\in P$ and $C>0$ such that ${\cal P}\{\rho(X,p_0)<C\}>0$ and}\\ 
	\mbox{that such that  for every sequence $p_n\in P$ with $d(p_0,p_n)\to \infty$}\\ \mbox{there is a sequence $M_n \to \infty$ with $\rho(x,p_n) > M_n$ for all $x\in Q$ }\\ \mbox{with $\rho(x,p_0)<C$. Moreover, if $p_n\in P$ with $d(p*,p_n)\to \infty$} \\\mbox{for some $p*\in M$, then $d(p_0,p_n)\to \infty$.}\end{array}\right\} 
	\end{eqnarray}

	\begin{Th}[Ziezold's Strong Consistency]\label{ext_Ziezold:thm}
	 Let $\rho : Q\times P \to [0,\infty)$ be a continuous function on the product of a topological space with a separable space with distance $(P,d)$. Then strong consistency holds in the Ziezold sense (ZC)  for the set of Fr\'echet $\rho$-means on $P$ if 
	\begin{enumerate}
	 \item[(i)] $X$ has compact support, or if
	 \item[(ii)] $\mathbb E\big(\rho(X,p)^2\big)<\infty $ for all $p\in P$ and $\rho$ is uniformly continuous in the second argument in the sense of (\ref{unif_cont}).
	\end{enumerate}
	\end{Th}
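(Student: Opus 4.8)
The plan is to follow Ziezold's original strategy, replacing the consequences of the triangle inequality he uses (local uniform continuity of $d$ and control of the empirical Fr\'echet functionals near a candidate limit point) by the hypotheses (i) or (ii). Write $F_n^\omega(p):=\frac1n\sum_{j=1}^n\rho(X_j(\omega),p)^2$ and $F(p):=\mathbb E(\rho(X,p)^2)$ for the empirical and population Fr\'echet functionals, and fix a countable dense subset $D\subset P$. Applying Kolmogorov's strong law of large numbers to the i.i.d.\ real random variables $\rho(X_j,q)^2$ --- whose expectation is finite, automatically in case (i) by compactness of ${\rm Supp}(X)$ and continuity of $\rho$, and by assumption in case (ii) --- and, in case (ii), also to $\rho(X_j,q)$ (integrable by Cauchy--Schwarz), then intersecting the resulting almost-sure events over the countably many $q\in D$, I obtain a single event $\Omega_0$ with ${\cal P}(\Omega_0)=1$ on which $F_n^\omega(q)\to F(q)$ for every $q\in D$, and $\frac1n\sum_{j=1}^n\rho(X_j(\omega),q)\to\mathbb E\rho(X,q)$ for every $q\in D$ in case (ii).

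Next I would establish continuity of $F$ together with a local ``near-equicontinuity'' of the $F_n^\omega$, $\omega\in\Omega_0$. In case (i), continuity of $\rho$ on $Q\times P$ and compactness of $K:={\rm Supp}(X)$ yield, for each $p\in P$ and $\epsilon>0$, a neighbourhood $V$ of $p$ with $\sup_{x\in K}|\rho(x,p')-\rho(x,p)|<\epsilon$ for all $p'\in V$; since $\rho(\cdot,p)$ is bounded on $K$, squaring gives $\sup_{x\in K}|\rho(x,p')^2-\rho(x,p)^2|<\epsilon'$, hence $|F(p')-F(p)|\le\epsilon'$ and $|F_n^\omega(p')-F_n^\omega(p)|\le\epsilon'$ for all $\omega\in\Omega_0$ and all $n$ (using $X_j(\omega)\in K$). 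In case (ii) the same argument runs with $K$ replaced by all of $Q$ by virtue of (\ref{unif_cont}), the only difference being that $\rho(\cdot,p)$ need no longer be bounded over $Q$, so the factor $2\rho(x,p)+\epsilon$ produced by squaring is absorbed into the first moment: $|F_n^\omega(p')-F_n^\omega(p)|\le\epsilon\big(2\,\frac1n\sum_{j=1}^n\rho(X_j(\omega),p)+\epsilon\big)$ and $|F(p')-F(p)|\le\epsilon\big(2\,\mathbb E\rho(X,p)+\epsilon\big)$. In both cases $F$ is continuous on $P$, and finiteness of all second moments guarantees $F<\infty$ everywhere.

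For the consistency statement proper, fix $\omega\in\Omega_0$ and let $\mu\in\bigcap_{n}\overline{\bigcup_{k\ge n}E^{(\rho)}_k(\omega)}$. Choosing for each $n$ a point of $\bigcup_{k\ge n}E^{(\rho)}_k(\omega)$ within distance $1/n$ of $\mu$ produces indices $k_l\to\infty$ and points $\mu_{k_l}\in E^{(\rho)}_{k_l}(\omega)$ with $\mu_{k_l}\to\mu$. For any $q\in D$, minimality of $\mu_{k_l}$ gives $F^\omega_{k_l}(\mu_{k_l})\le F^\omega_{k_l}(q)\to F(q)$. On the other hand, fixing $p\in D$ close to $\mu$, the near-equicontinuity from the previous step (centred at $\mu$, with $\mu_{k_l}$ and $p$ eventually close to $\mu$, and with $\frac1{k_l}\sum_j\rho(X_j(\omega),\mu)$ kept bounded via the nearby $p\in D$), combined with $F^\omega_{k_l}(p)\to F(p)$ and with $|F(p)-F(\mu)|$ small, yields $\liminf_l F^\omega_{k_l}(\mu_{k_l})\ge F(\mu)-\eta$ for arbitrarily small $\eta>0$. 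Hence $F(\mu)\le F(q)$ for every $q\in D$, and continuity of $F$ together with density of $D$ upgrades this to $F(\mu)\le F(p)$ for all $p\in P$; that is, $\mu\in E^{(\rho)}(X)$, which is precisely (ZC).

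The one genuinely delicate point --- and the reason the two extra hypotheses are needed at all --- is that the strong law has been invoked only on the countable set $D$, whereas the limit point $\mu$ of the sample Fr\'echet means is a priori an arbitrary point of $P$; transferring the convergence of $F^\omega_{k_l}(\mu_{k_l})$ into information about $F(\mu)$ is exactly where compactness of the support (case (i)) or the uniform-in-the-data continuity (\ref{unif_cont}) (case (ii)) enters, and in case (ii) the loss of boundedness of $\rho$ over $Q$ is what forces the detour through first moments. The remainder is bookkeeping that parallels \citet{Z77} line by line.
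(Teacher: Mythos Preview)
Your proposal is correct and follows essentially the same route as the paper's proof: both reduce case (i) to an almost-sure version of the uniform continuity condition (\ref{unif_cont}) via compactness of the support, apply the scalar SLLN on a countable dense subset of $P$, and then transfer convergence to an arbitrary limit point $\mu$ of the sample means by controlling $|F_n(p')-F_n(p)|$ through the factorization $|\rho^2(x,p')-\rho^2(x,p)|\le|\rho(x,p')-\rho(x,p)|\big(2\rho(x,p)+|\rho(x,p')-\rho(x,p)|\big)$ and the first-moment SLLN at a nearby dense point. The paper organizes the same ingredients slightly more modularly---first proving $F_n(p)\to F(p)$ for all $p$, then $F_n(p_n)\to F(p)$ along sequences $p_n\to p$, then $\ell_n\to\ell$---whereas you argue the lower bound $\liminf F_{k_l}(\mu_{k_l})\ge F(\mu)$ directly; but the underlying estimates and the role of hypotheses (i)/(ii) are identical.
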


	\begin{proof}
	Obviously under $(i)$ for every $p\in P$ and $\epsilon  >0$ we may assume that there is $\delta=\delta(\epsilon,p)$ such that $|\rho(X,p') -\rho(X,p)|<\epsilon$ a.s. if $d(p,p')<\delta$. With this is mind, it suffices to prove the assertion under (ii).
	
	 For $\omega \in \Omega$, $p\in P$ set
	\begin{eqnarray}\label{ziez:hyp} 
	&\left.\begin{array}{rclcrcl}F_n(p) &=& \frac{1}{n}\sum_{j=1}^n \rho(X_j(\omega),p)^2,&&F(q) &=& \mathbb E\big(\rho(X,p)^2\big),\\ \lw_n &=& \inf_{p\in P}F_n(q),&&\lw &=& \inf_{p\in P}F(p),\\
	E_n &=&\{p\in P: F_n(p) = \lw\},&& E&=&\{p\in P: F(p)=\lw\}\,.\end{array}\right\}\end{eqnarray}
	We now follow the steps laid out in \cite{Z77}. Let $p_1,p_2,\ldots$ be dense in $P$. 
	From the usual Strong Law of Large Numbers in $\mathbb R$ we have sets $A_k\subset \Omega$, ${\cal P}(A_k) =1$ such that
	$F_n(p_k) \stackrel{n\to \infty}{\to} F(p_k)$
	for every $k=1,2,\ldots$ and $\omega \in A_k$. Setting $A:= \cap_{k=1}^\infty A_k$ we have hence for all $p=p_k, k=1,2,\ldots$
	\begin{eqnarray}\label{Ziez1:eq} 
	F_n(p) \stackrel{n\to \infty}{\to} F(p) \mbox{ for all }\omega \in A,~{\cal P}(A)=1\,.
	\end{eqnarray}
	Next, let $p,p' \in P$. Setting $f(q,p',p) :=\rho(q,p') - \rho(q,p)$ we have then
	\begin{eqnarray}\nonumber \label{Ziezolds_uni:est}
	 \big|F_n(p') - F_n(p)\big| &\leq& \frac{1}{n}\sum_{j=1}^n\big(\rho(X_j,p') + \rho(X_j,p)\big)\big|\rho(X_j,p') - \rho(X_j,p)\big|\\
	&=& \left\{\begin{array}{l}
	\frac{1}{n}\sum_{j=1}^n\big(2\rho(X_j,p') + f(X_j,p,p') \big) |f(X_j,p,p')| \\[0.2cm]
	\frac{1}{n}\sum_{j=1}^n\big(2\rho(X_j,p) + f(X_j,p',p) \big) |f(X_j,p,p')| 
	\end{array}\right.
	\end{eqnarray}
	W.l.o.g. we may suppose that $p_k \to p \in P$. In consequence from the top line of (\ref{Ziezolds_uni:est}) for $p'=p_k$, :
	\begin{eqnarray*}
	& \frac{1}{n}\sum_{j=1}^n \rho(X_j(\omega),p_k)^2 - \frac{1}{n}\sum_{j=1}^n\big(2\rho(X_j,p_k)  + |f(X_j,p,p_k)| \big) |f(X_j,p,p_k)|  
	\\
	&\leq \frac{1}{n}\sum_{j=1}^n\rho(X_j,p)^2\big) \leq\\
	& \frac{1}{n}\sum_{j=1}^n \rho(X_j(\omega),p_k)^2 + \frac{1}{n}\sum_{j=1}^n\big(2\rho(X_j,p_k)  + |f(X_j,p,p_k)| \big) |f(Y_j,p,p_k)|  
	\end{eqnarray*}
	Taking the expected value, i.e. employing (\ref{Ziez1:eq}) at $p_k$, by hypothesis (i) or (ii) as explained in the beginning of the proof, for arbitrary $\epsilon >0$ we may assume that there is a $\delta>0$ such that for all $d(p_k,p)<\delta$
	\begin{eqnarray*}
	& \mathbb E\big(\rho(X,p_k)^2\big) - \Big(2\mathbb E\big(\rho(X,p_k)\big) + \epsilon \Big) \epsilon 
	\\
	&\leq \liminf_{n\to \infty}\frac{1}{n}\sum_{j=1}^n\rho(X_j,p)^2 \leq \limsup_{n\to \infty}\frac{1}{n}\sum_{j=1}^n\rho(X_j,p)^2\leq\\
	& \mathbb E\big(\rho(X,p_k)^2\big) + \Big(2\mathbb E\big(\rho(X,p_k)\big) + \epsilon \Big) \epsilon \,.
	\end{eqnarray*}
	Letting $\epsilon \to 0$ we can choose a subsequence $p_{k_i} \to p$, hence, this yields the validity of (\ref{Ziez1:eq}) for all $p\in P$.

	Let us now extend (\ref{Ziez1:eq}) to 
	\begin{eqnarray}\label{Ziez2:eq}&& 
	F_n(p_n) \stackrel{n\to \infty}{\to} F(p) \mbox{ for all sequences $p_n\to p$ and } \omega \in A,~{\cal P}(A)=1\,.
	\end{eqnarray}
	Utilizing the bottom line from (\ref{Ziezolds_uni:est}) yields 
	$$\big|F_n(p_n) - F_n(p)\big| \leq\frac{1}{n}\sum_{j=1}^n\big(2\rho(X_j,p) + |f(X_j,p_n,p)|  \big) |f(X_j,p,p_n)|  \to 0\,.$$
	Hence as desired $\big|F_n(p_n) - F(p)\big| \leq \big|F_n(p_n) - F_n(p)\big| + \big|F(p) - F_n(p)\big| \to 0$. 
	
	Finally let us show
	\begin{eqnarray}\label{Ziez3:eq} \mbox{if $\cap_{n=1}^\infty \overline{\cup_{k=n}^\infty E_n}\neq \emptyset$ then $\lw_n \to \lw$}\end{eqnarray}
	Note that, the assertion of the Theorem is trivial in case of $\cap_{n=1}^\infty \overline{\cup_{k=n}^\infty E_n}=\emptyset$. Otherwise, for ease of notation let $B_n := \cup_{k=n}^\infty E_k$, $\overline{B_n}\searrow B:=\cap_{n=1}^\infty \overline{B_n}$, $b \in B$. Then $b \in \overline{B_n}$ for all $n\in\mathbb N$. Hence, there is a sequence $b_n \in B_n$, $b_n \to b$. Moreover there is a sequence $k_n$ such that
	 $b_n =p_{k_n} \in E_{k_n}$ for a suitable $k_n \geq n$. 
	Then $\lw_{n_k}=F_{n_k}(p_{n_k})\to F(b)\geq \lw$ by (\ref{Ziez2:eq}). On the other hand by (\ref{Ziez1:eq}) for arbitrary fixed $p\in P$ there is a sequence $\epsilon_{n}\to 0$ such that
	$F(p)\geq F_n(p)- \epsilon_{n}\geq \lw_n - \epsilon_{n}$. First letting $n\to \infty$ and then considering the infimum over $p\in P$ yields 
	$$\lw \geq\limsup_{n\to\infty} \lw_n\,.$$
	In consequence $\lw_n \to \lw =F(b)$. In particular we have shown that $b\in E\neq \emptyset$ thus completing the proof.
	\end{proof}

	\begin{Th}[Bhattacharya-Patrangenaru's Strong Consistency]\label{ext_BP:thm}	
	Suppose that (ZC) (Ziezold's strong consistency) holds for a continuous function $\rho: Q\times P\to \mathbb [0,\infty)$ on the product of a topological space with a 
	space with distance $(P,d)$. If additionally $\emptyset \neq E^{(\rho)}$, 
	$\overline{\cup_{n=1}^\infty E^{(\rho)}_n(\omega)}$ enjoys the Heine-Borel property for almost all $\omega\in \Omega$ and the coercivity condition (\ref{mild_coerc}) in the second argument is satisfied then 
	the property of strong consistency (BPC) in the sense of Bhattacharya and Patrangenaru  is valid.
	\end{Th}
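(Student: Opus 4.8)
The plan is to argue by contradiction, the Heine-Borel property and the coercivity condition $(\ref{mild_coerc})$ serving precisely to prevent sample $\rho$-means from escaping to infinity, after which the already-available Ziezold inclusion (ZC) closes the gap. First I would fix a set $\Omega_0\subseteq\Omega$ of full probability on which, simultaneously: (a) the inclusion $\bigcap_{m}\overline{\bigcup_{k\ge m}E^{(\rho)}_k(\omega)}\subseteq E^{(\rho)}$ of (ZC) holds; (b) $\overline{\bigcup_{m}E^{(\rho)}_m(\omega)}$ has the Heine-Borel property; (c) the strong law of large numbers gives $\tfrac1n\sum_{j=1}^n\mathbf{1}\{\rho(X_j(\omega),p_0)<C\}\to\alpha:={\cal P}\{\rho(X,p_0)<C\}>0$, with $p_0,C$ from $(\ref{mild_coerc})$; and (d) $F_n(q):=\tfrac1n\sum_{j=1}^n\rho(X_j(\omega),q)^2$ satisfies $F_n(q_0)\to\mathbb E\big(\rho(X,q_0)^2\big)$ for a fixed $q_0\in P$ with $\mathbb E\big(\rho(X,q_0)^2\big)<\infty$. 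Such a $q_0$ exists unless $\mathbb E(\rho(X,q)^2)=\infty$ for every $q\in P$, in which case $E^{(\rho)}=P$ and (BPC) is immediate, so I may assume $q_0$ is at hand. Now suppose (BPC) fails; then for some $\epsilon>0$ there are $\omega\in\Omega_0$, a strictly increasing sequence $(n_i)$, and points $p_{n_i}\in E^{(\rho)}_{n_i}(\omega)$ with $d\big(E^{(\rho)},p_{n_i}\big)>\epsilon$ for all $i$, and I abbreviate $\lw_{n_i}:=F_{n_i}(p_{n_i})=\inf_{q\in P}F_{n_i}(q)$.

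The crux — and the step I expect to be the main obstacle — is to show that $(p_{n_i})$ is bounded in $(P,d)$. If it were not, I could pass to a subsequence along which $d$ to some fixed point of $P$ diverges, hence, by the last clause of $(\ref{mild_coerc})$, along which $d(p_0,p_{n_i})\to\infty$; then $(\ref{mild_coerc})$ furnishes $M_i\to\infty$ with $\rho(x,p_{n_i})>M_i$ for all $x\in Q$ with $\rho(x,p_0)<C$. By (c), for all large $i$ at least $\tfrac\alpha2 n_i$ of the indices $j\le n_i$ satisfy $\rho(X_j(\omega),p_0)<C$, so restricting the empirical sum to these indices yields $\lw_{n_i}=F_{n_i}(p_{n_i})\ge\tfrac\alpha2 M_i^2\to\infty$. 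This contradicts $\lw_{n_i}\le F_{n_i}(q_0)\to\mathbb E\big(\rho(X,q_0)^2\big)<\infty$ from (d). Hence $(p_{n_i})$ is bounded; this is exactly the balance coercivity strikes against the law-of-large-numbers growth of the empirical Fr\'echet functional, and getting the two estimates to meet cleanly is where I expect the real work to lie.

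Finally, I would note that $(p_{n_i})$ lies in the closed set $\overline{\bigcup_m E^{(\rho)}_m(\omega)}$, which by (b) has the Heine-Borel property, so, being bounded, it admits a subsequence $p_{n_{i_j}}\to p_\infty$ with $p_\infty$ in that set. For every fixed $m$ all but finitely many of the $p_{n_{i_j}}$ lie in $\bigcup_{k\ge m}E^{(\rho)}_k(\omega)$, so $p_\infty\in\overline{\bigcup_{k\ge m}E^{(\rho)}_k(\omega)}$ for all $m$, and therefore $p_\infty\in\bigcap_{m}\overline{\bigcup_{k\ge m}E^{(\rho)}_k(\omega)}\subseteq E^{(\rho)}$ by (ZC). But $d$ is continuous and vanishes on the diagonal, so $d\big(E^{(\rho)},p_{n_{i_j}}\big)\le d\big(p_\infty,p_{n_{i_j}}\big)\to d(p_\infty,p_\infty)=0$, contradicting $d\big(E^{(\rho)},p_{n_{i_j}}\big)>\epsilon$. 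This contradiction yields (BPC); its remaining requirement, $E^{(\rho)}\neq\emptyset$, is part of the hypotheses.
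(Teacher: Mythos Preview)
Your proof is correct and follows essentially the same route as the paper's: use the coercivity condition together with the strong law for the indicator $\mathbf{1}\{\rho(X_j,p_0)<C\}$ to rule out escape to infinity, then invoke the Heine-Borel property to extract a convergent subsequence whose limit, by (ZC), must lie in $E^{(\rho)}$, contradicting $d(E^{(\rho)},p_{n_i})>\epsilon$. Your setup of the common full-probability set $\Omega_0$ and your handling of the case $\mathbb E(\rho(X,q)^2)\equiv\infty$ are in fact a bit more explicit than the paper's own argument.
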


	\begin{proof} 
	We use the notation of the previous proof. 
	Consider a sequence $q_n \in E_n$ determined by
	$$d(p,E) = \max_{p\in E_n}d(p,E)=:r_n\,.$$
	If $r_n \not\to 0$ we can find a subsequence $n_k$ such that $r_{n_k}\geq r_0 >0$. In consequence, every accumulation point of $p_{n_k}$ has positive distance to $E$, a contradiction to strong consistency. Hence either $r_{n}\to 0$ or there are no accumulation points. 

	We shall now rule out the case that there are no accumulation points. In view of the Heine-Borel property, this case can only occur for $r_n \to \infty$. 
	Under condition (\ref{mild_coerc}) there is $p_0\in P$, a subsequence $k(n)$ and for given $n$ a subsequence $n_1,\ldots,n_{k(n)}$ of $1,2,\ldots,n$ such that $\rho(X_{n_j},p_0) < C$ for all $j=1,\ldots, k(n)$, a.s., and 
	$$\frac{k(n)}{n} \to {\cal P}\{\rho(X,p_0)<C\} >0\,.$$
	Also, by hypothesis and condition (\ref{mild_coerc}), $d(p_0,p_n)\to \infty$.
	Moreover by condition (\ref{mild_coerc}), there is a sequence $M_n \to \infty$ with
	\begin{eqnarray*}
	 \lw_n ~=~F_n(p_n) &\geq & \frac{1}{n}\sum_{j=1}^{k(n)} \rho( X_{n_j}(\omega),p_n)^2~>~\frac{k(n)}{n}\,M^2_n \to \infty~~a.s.
	\end{eqnarray*}
	On the other hand for any fixed $p\in E$, we have the Strong Law on $\mathbb R$,
	$\lw_n \leq F_n(p) \to F(p) = \lw \mbox{ a.s.}\,,$
	yielding a contradiction.

	\end{proof}
\section{Proof of Theorem 2.2 
	}\label{rhosq-smoot:ap}
	\begin{proof}
	Consider for fixed $p\in S_2^k$ the non-negative $O(2)$-invariant smooth function on $O_2^H(2,k-1)$ defined by 
	$$f_p(x,v) := \Big(\arccos\sqrt{\langle p,x\rangle^2 + \langle p,v\rangle^2}\Big)^2\,.$$
	Then the condition
	$$ f_p(\xi x,\xi v) = \min_{e^{it}\in S^1} f_p(e^{it}x,e^{it}v) = \arccos\left(\max_{e^{it}\in S^1} \left(\langle e^{it}p,x\rangle^2 + \langle e^{it}p,v\rangle^2\right)\right) $$
	defines two smooth explicit branches $\pm \xi : O_2^H(2,k-1) \setminus M^0 \to S^1$ (cf. \citet[Theorem 4.3]{HT06}), outside the singularity set 
	$$M^0 = \{(x,v) \in O_2^H(2,k-1): D(x,v)=0 =A(x,v)^2- B(x,v)^2 \}$$
	using the notation from \cite{HT06}:
	\begin{eqnarray*}
	 A(x,v)^2 &=& \langle x,p\rangle^2 + \langle v,p\rangle^2\,,\\ 
	 B(x,v)^2 &=& \langle x,ip\rangle^2 + \langle v,ip\rangle^2\,,\\
	 D(x,v) &=& 2\Big(\langle x,p\rangle \langle x,ip\rangle +\langle v,p\rangle \langle v,ip\rangle\Big)\,.
	\end{eqnarray*}
	One verifies that $M^0$ is bi-invariant, i.e. invariant under the left action of $S^1$ and the right action of $O(2)$. Hence,
	on $O(2)\backslash \Big(O_2^H(2,k-1)\setminus M^0\Big)/S^1$, the square of the distance $\rho([p],[\gamma])$ agrees with the value of the bi-invariant function $(x,v) \mapsto f_p(\xi(x,v)x,\xi(x,v)v)$, hence 
	$$\frak{p}\circ\gamma\mapsto \rho^2([p],\frak{p}\circ\gamma)$$
	is smooth on $O(2)\backslash \Big(O_2^H(2,k-1)\setminus M^0\Big)/S^1$. 

	Finally, we show that the geodesics in $\Gamma(\Sigma_2^k)$ determined by $M^0$ are at least  $\pi/4$ away from $[p]$. Obviously, geodesics determined by $A(x,v)^2=0 = B(x,v)^2$ have distance $\pi/2$ to $[p]$. Suppose now that $(x,v)\in M^0$ with $A(x,v)^2>0$. W.l.o.g. assume that $\langle x,p\rangle \neq 0$. Then
	$$ B(x,v)^2 = \frac{\langle v,ip\rangle^2}{\langle x,p\rangle^2}\, A(x,v)^2$$
	which implies that $\langle v,ip\rangle^2=\langle x,p\rangle^2$. In consequence we have also $\langle v,p\rangle^2=\langle x,ip\rangle^2$ and, in particular, ${\rm sign}\big(\langle x,p\rangle \langle v,ip\rangle\big) =-{\rm sign}\big(\langle x,ip\rangle \langle v,p\rangle\big)=:\epsilon$. 
	Then, we have for the shape distance $\rho$ to $[p]$ of shapes along the geodesic $\gamma$ through $[x]$ with initial velocity $v$ at $[x]$ that
	\begin{eqnarray*}
	\cos\rho([p],\gamma(s)) &=& \max_{0\leq t< 2\pi}\langle p\cos t + ip \sin t,x\cos s+v\sin s\rangle\\
	&=&  \max_{0\leq t< 2\pi}\Big(\langle x,p\rangle \cos (t-\epsilon s) + \langle v,p\rangle \sin (t-\epsilon s)\Big)\\
	&=&\max\Big(|\langle x,p\rangle|,|\langle v,p\rangle| \Big)
	\end{eqnarray*}
	is constant, giving as desired $\rho([p],\gamma)\geq \pi/4$. 
%
%
%
	\end{proof}

\bibliographystyle{elsart-harv}
\bibliography{shape,botany}

\begin{thebibliography}{47}
\expandafter\ifx\csname natexlab\endcsname\relax\def\natexlab#1{#1}\fi
\expandafter\ifx\csname url\endcsname\relax
  \def\url#1{\texttt{#1}}\fi
\expandafter\ifx\csname urlprefix\endcsname\relax\def\urlprefix{URL }\fi

\bibitem[{Bailey and Sinnott(1915)}]{BS15}
Bailey, I.~W., Sinnott, E.~W., 1915. A botanical index of creaceous and
  tertiary climates. Science 41~(1066), 831--834.

\bibitem[{Beem and Parker(1991)}]{BeemParker91}
Beem, J.~K., Parker, P.~E., 1991. The space of geodesics. Geometriae Dedicata
  38~(1), 87--99.

\bibitem[{Bhattacharya and Patrangenaru(2003)}]{BP03}
Bhattacharya, R.~N., Patrangenaru, V., 2003. Large sample theory of intrinsic
  and extrinsic sample means on manifolds {I}. Ann. Statist. 31~(1), 1--29.

\bibitem[{Bhattacharya and Patrangenaru(2005)}]{BP05}
Bhattacharya, R.~N., Patrangenaru, V., 2005. Large sample theory of intrinsic
  and extrinsic sample means on manifolds {II}. Ann. Statist. 33~(3),
  1225--1259.

\bibitem[{Bookstein(1978)}]{B78}
Bookstein, F.~L., 1978. The Measurement of Biological Shape and Shape Change,
  2nd Edition. Vol.~24 of Lecture Notes in Biomathematics. Springer-Verlag, New
  York.

\bibitem[{Bookstein(1986)}]{B86}
Bookstein, F.~L., 1986. Size and shape spaces for landmark data in two
  dimensions (with discussion). Statistical Science 1~(2), 181--222.

\bibitem[{Bredon(1972)}]{Bre72}
Bredon, G.~E., 1972. Introduction to Compact Transformation Groups. Vol.~46 of
  Pure and Applied Mathematics. Academic Press.

\bibitem[{Brenner(1902)}]{Brenner1902}
Brenner, W., 1902. Klima und {B}latt bei der {G}attung {Q}uercus. Flora 90,
  114--160.

\bibitem[{Choquet(1954)}]{Choq54}
Choquet, G., 1954. {Theory of capacities.} Ann. Inst. Fourier 5, 131--295.

\bibitem[{Dryden and Mardia(1998)}]{DM98}
Dryden, I.~L., Mardia, K.~V., 1998. Statistical Shape Analysis. Wiley,
  Chichester.

\bibitem[{Edelman et~al.(1998)Edelman, Arias, and Smith}]{EAS98}
Edelman, A., Arias, T.~A., Smith, S.~T., 1998. {The geometry of algorithms with
  orthogonality constraints.} SIAM J. Matrix Anal. Appl. 20~(2), 303--353.

\bibitem[{Evans et~al.(2006)Evans, Dryden, and Le}]{EDL}
Evans, K., Dryden, I., Le, H., 2006. Shape curves and geodesic modelling.
  Manuscript, http://www.maths.nottingham.ac.uk/personal/
  ild/papers/curves2.pdf.

\bibitem[{Fr\'echet(1948)}]{F48}
Fr\'echet, M., 1948. Les \'el\'ements al\'eatoires de nature quelconque dans un
  espace distanci\'e. Ann. Inst. H. Poincar\'e 10~(4), 215--310.

\bibitem[{Goodall(1991)}]{G91}
Goodall, C.~R., 1991. Procrustes methods in the statistical analysis of shape
  (with discussion). Journal of the Royal Statistical Society, Series B 53,
  285--339.

\bibitem[{Gower(1975)}]{Gow}
Gower, J.~C., 1975. Generalized {P}rocrustes analysis. Psychometrika 40,
  33--51.

\bibitem[{Hendriks and Landsman(1996)}]{HL96}
Hendriks, H., Landsman, Z., 1996. Asymptotic behaviour of sample mean location
  for manifolds. Statistics \& Probability Letters 26, 169--178.

\bibitem[{Hendriks and Landsman(1998)}]{HL98}
Hendriks, H., Landsman, Z., 1998. Mean location and sample mean location on
  manifolds: asymptotics, tests, confidence regions. Journal of Multivariate
  Analysis 67, 227--243.

\bibitem[{Hotz et~al.(2010)Hotz, Huckemann, Gaffrey, Munk, and
  Sloboda}]{HHGMS07}
Hotz, T., Huckemann, S., Gaffrey, D., Munk, A., Sloboda, B., 2010. Shape spaces
  for pre-alingend star-shaped objects in studying the growth of plants.
  Journal of the Royal Statistical Society, Series C 59~(1), 127--143.

\bibitem[{Huckemann(2010{\natexlab{a}})}]{H_dynamics_10}
Huckemann, S., 2010{\natexlab{a}}. Dynamic shape analysis and comparison of
  leaf growth. Preprint, arXiv, 1002.0616v1 [stat.ME].

\bibitem[{Huckemann(2010{\natexlab{b}})}]{H_Procrustes_10}
Huckemann, S., 2010{\natexlab{b}}. Inference on 3{D} {P}rocrustes means: Tree
  boles growth, rank-deficient diffusion tensors and perturbation models.
  Scand. J. Statist., to appear.

\bibitem[{Huckemann(2010{\natexlab{c}})}]{H_meansmeans_10}
Huckemann, S., 2010{\natexlab{c}}. On the meaning of mean shape. Preprint,
  arXiv, 1002.0795v1 [stat.ME].

\bibitem[{Huckemann and Hotz(2009)}]{HT06}
Huckemann, S., Hotz, T., 2009. Principal components geodesics for planar shape
  spaces. Journal of Multivariate Analysis 100, 699--714.

\bibitem[{Huckemann et~al.(2010{\natexlab{a}})Huckemann, Hotz, and
  Munk}]{HHM09}
Huckemann, S., Hotz, T., Munk, A., 2010{\natexlab{a}}. Intrinsic {MANOVA} for
  {R}iemannian manifolds with an application to {K}endall's space of planar
  shapes. IEEE Transactions on Pattern Analysis and Machine Intelligence
  32~(4), 593--603.

\bibitem[{Huckemann et~al.(2010{\natexlab{b}})Huckemann, Hotz, and
  Munk}]{HHM07}
Huckemann, S., Hotz, T., Munk, A., 2010{\natexlab{b}}. Intrinsic shape
  analysis: Geodesic principal component analysis for {R}iemannian manifolds
  modulo {L}ie group actions (with discussion). Statistica Sinica 20~(1),
  1--100.

\bibitem[{Jung et~al.(2010{\natexlab{a}})Jung, Foskey, and
  Marron}]{JFM_sinica_2010}
Jung, S., Foskey, M., Marron, J.~S., 2010{\natexlab{a}}. Comment to intrinsic
  shape analysis: Geodesic principal component analysis for {R}iemannian
  manifolds modulo {L}ie group actions. Statistica Sinica 20~(1), 63--65.

\bibitem[{Jung et~al.(2010{\natexlab{b}})Jung, Foskey, and Marron}]{JFM2011}
Jung, S., Foskey, M., Marron, J.~S., 2010{\natexlab{b}}. Principal arc analysis
  on direct product manifolds. Submitted to the Annals of Applied Statistics.

\bibitem[{Jupp and Kent(1987)}]{JK87}
Jupp, P.~E., Kent, J.~T., 1987. Fitting smooth path to spherical data. Appl.
  Statist. 36~(1), 34--46.

\bibitem[{Kendall(1974)}]{Kend74}
Kendall, D., 1974. {Foundations of a theory of random sets.} {Stochastic Geom.,
  Tribute Memory Rollo Davidson, 322-376 (1974).}

\bibitem[{Kendall(1984)}]{K84}
Kendall, D.~G., 1984. Shape manifolds, {P}rocrustean metrics and complex
  projective spaces. Bull. Lond. Math. Soc. 16~(2), 81--121.

\bibitem[{Kendall(1989)}]{K89}
Kendall, D.~G., 1989. {A survey of the statistical theory of shape}.
  Statistical Science 4~(2), 87--99.

\bibitem[{Kendall et~al.(1999)Kendall, Barden, Carne, and Le}]{KBCL99}
Kendall, D.~G., Barden, D., Carne, T.~K., Le, H., 1999. Shape and Shape Theory.
  Wiley, Chichester.

\bibitem[{Kendall(1990)}]{KWS90}
Kendall, W.~S., 1990. Probability, convexity, and harmonic maps with small
  image {I}: Uniqueness and fine existence. Proc. London Math. Soc. 61,
  371--406.

\bibitem[{Kent and Mardia(1997)}]{KM97}
Kent, J.~T., Mardia, K.~V., 1997. Consistency of {P}rocrustes estimators.
  Journal of the Royal Statistical Society, Series B 59~(1), 281--290.

\bibitem[{Kent et~al.(2001)Kent, Mardia, Morris, and Aykroyd}]{KMMA01}
Kent, J.~T., Mardia, K.~V., Morris, R.~J., Aykroyd, R.~G., 2001. Functional
  models of growth for landmark data. In: Mardia, K.~V., Aykroyd, R.~G. (Eds.),
  Proceedings in Functional and Spatial Data Analysis. Leeds University Press,
  pp. 109--115.

\bibitem[{Kume et~al.(2007)Kume, Dryden, and Le}]{KDL07}
Kume, A., Dryden, I., Le, H., 2007. Shape space smoothing splines for planar
  landmark data. Biometrika 94~(3), 513--528.

\bibitem[{Le(2001)}]{L01}
Le, H., 2001. Locating {F}r\'{e}chet means with an application to shape spaces.
  Adv. Appl. Prob. (SGSA) 33~(2), 324--338.

\bibitem[{Le and Kume(2000)}]{LK00}
Le, H., Kume, A., 2000. Detection of shape changes in biological features.
  Journal of Microscopy 200~(2), 140--147.

\bibitem[{Lehmann(1997)}]{L97}
Lehmann, E.~L., 1997. Testing Statistical Hypotheses (Springer Texts in
  Statistics). Springer.

\bibitem[{Matheron(1975)}]{Math75}
Matheron, G., 1975. {Random sets and integral geometry.} {Wiley Series in
  Probability and Mathematical Statistics. New York}.

\bibitem[{Morris et~al.(2000)Morris, Kent, Mardia, and Aykroyd}]{MKMA00}
Morris, R., Kent, J.~T., Mardia, K.~V., Aykroyd, R.~G., 2000. A parallel growth
  model for shape. In: Arridge, S., Todd-Pokropek, A. (Eds.), Proceedings in
  Medical Imaging Understanding and Analysis. Bristol: BMVA, pp. 171--174.

\bibitem[{Royer et~al.(2009)Royer, Meyerson, Robertson, and Adams}]{RMRA08}
Royer, D.~L., Meyerson, L.~A., Robertson, K.~M., Adams, J.~M., 10 2009.
  Phenotypic plasticity of leaf shape along a temperature gradient in acer
  rubrum. PLoS ONE 4~(10), e7653.

\bibitem[{Terras(1988)}]{T88}
Terras, A., 1988. Harmonic Analysis on Symmetric Spaces and Applications II.
  Springer-Verlag, New York.

\bibitem[{Thompson(1917)}]{T17}
Thompson, D.~W., 1917. On Growth and Form. Cambridge University Press.

\bibitem[{Wolfe(1978)}]{Wolfe78}
Wolfe, J., 1978. A paleobotanical interpretation of tertiary climates in the
  northern hemisphere. American Scientist 66, 694--703.

\bibitem[{Wolfe(1993)}]{Wolfe93}
Wolfe, J., 1993. A method of obtaining climatic parameters from leaf
  assemblages. US Geological Survey Bull. 2040.

\bibitem[{Ziezold(1977)}]{Z77}
Ziezold, H., 1977. Expected figures and a strong law of large numbers for
  random elements in quasi-metric spaces. Trans. 7th Prague Conf. Inf. Theory,
  Stat. Dec. Func., Random Processes A, 591--602.

\bibitem[{Ziezold(1994)}]{Z94}
Ziezold, H., 1994. Mean figures and mean shapes applied to biological figure
  and shape distributions in the plane. Biom. J.~(36), 491--510.

\end{thebibliography}
\end{document}